\newcommand{\pdl}[2]{\frac{\partial #1}{\partial #2}}
\newcommand{\ddd}{\mathrm{d}}
\newcommand{\p}{\partial}
\newcommand{\const}{\mathop{\rm const}\nolimits}
\newcommand{\todo}[1][\null]{\ensuremath{\clubsuit}}
\newcommand{\noprint}[1]{}
\newtheorem{theorem}{Theorem}
\theoremstyle{definition}
\newtheorem{remark}{Remark}
\newtheorem*{remark*}{Remark}
\newtheorem*{example*}{Example}
\newcommand{\checked}[1][\null]{\ensuremath{\boldsymbol{\surd}}}
\newcommand{\DD}{\mathrm{D}}
\newcommand{\nn}{\mathbf{\nabla}}
\newcommand{\ve}{\varepsilon}
\newcommand{\Ad}{\mathrm{Ad}}
\newcommand{\FF}{\mathcal{F}}
\newcommand{\ZZ}{\mathcal{Z}}
\newcommand{\XX}{\mathcal{X}}
\begin{document}

\par\noindent {\LARGE\bf
Lie symmetry analysis and exact solutions \\ of the quasi-geostrophic two-layer problem
\par}
{\vspace{4mm}\par\noindent {\bf Alexander Bihlo~$^\dag$ and Roman O. Popovych~$^\dag\, ^\ddag$
} \par\vspace{2mm}\par}

{\vspace{2mm}\par\noindent {\it
$^{\dag}$~Faculty of Mathematics, University of Vienna, Nordbergstra{\ss}e 15, A-1090 Vienna, Austria\\
}}
{\noindent \vspace{2mm}{\it
$\phantom{^\dag}$~\textup{E-mail}: alexander.bihlo@univie.ac.at
}\par}

{\vspace{2mm}\par\noindent {\it
$^\ddag$~Institute of Mathematics of NAS of Ukraine, 3 Tereshchenkivska Str., 01601 Kyiv, Ukraine\\
}}
{\noindent \vspace{2mm}{\it
$\phantom{^\dag}$~\textup{E-mail}: rop@imath.kiev.ua
}\par}

\vspace{2mm}\par\noindent\hspace*{8mm}\parbox{140mm}{\small
The quasi-geostrophic two-layer model is of superior interest in dynamic meteorology since it is one of the easiest ways to study baroclinic processes in geophysical fluid dynamics. The complete set of point symmetries of the two-layer equations is determined. An optimal set of one- and two-dimensional inequivalent subalgebras of the maximal Lie invariance algebra is constructed. On the basis of these subalgebras we exhaustively carry out group-invariant reduction and compute various classes of exact solutions. Where possible, reference to the physical meaning of the exact solutions is given. In particular, the well-known baroclinic Rossby wave solutions in the two-layer model are rediscovered. 
}\par\vspace{2mm}

\section{Introduction}

There is a long history in dynamic meteorology to use simplified models of the atmosphere rather than the complete set of hydro-thermodynamical equations to study only selected phenomenon instead of accounting for the whole variety of weather and climate at once. The greatest simplification which is still capable to qualitatively (and, under some conditions, also quantitatively) describe the behavior of large-scale geophysical fluid dynamics is the barotropic vorticity equation. While this equation indeed allows one to explain the propagation of the mid-latitude Rossby waves, it cannot be used to elucidate the occurrence of developing weather regimes. The reason for this substantial lack is that the barotropic vorticity equation is a single equation valid only in one atmospheric (pressure) layer. However, development in the atmosphere is usually associated with the vertical structure of, e.g.,~the entropy field and hence a single-layer consideration is at once limited.

Of course, the atmosphere is continuously stratified and hence it is in fact three-dimensional. However, the main process of \emph{baroclinic instability}, which is the dominant mechanism responsible for the formation of mid-latitude weather systems can already be qualitatively understood by considering only two coupled atmospheric layers. Moreover, the studies of layer models create a basis for the more complicated investigation of the three-dimensional governing equations. For this reason, many results for the two-layer model are available in dynamic meteorology \cite{holt04Ay,pedl87Ay}, making this model particularly interesting for a systematic mathematical investigation.

The aim of this paper is to carry out Lie symmetry analysis of the quasi-geostrophic two-layer model. There already exist a number of papers dealing with symmetries and exact solutions of the simpler barotropic vorticity equations \cite{bihl09Ay,bihl09Cy,ibra95Ay,katk65Ay,katk66Ay} as well as the more complicated Euler or Navier--Stokes equations (see, e.g., \cite{andr98Ay,fush94Ay,mele04Ay,mele05Ay,popo00Ay} and references therein). At the same time, the intermediate two-layer model has not been considered in this light so far.

The organization of the paper is the following:
The model equations are presented in Section~\ref{sec:modelQG}, together with some of their known properties.
Section~\ref{sec:symmetriesQG} contains the results on Lie symmetries of these equations.
The theorem on the complete point symmetry group of the two-layer model is proved in Section~\ref{sec:CompletePointSymmetryGroupQG}.
A list of one- and two-dimensional inequivalent subalgebras of the corresponding maximal Lie invariance algebra is constructed in Sections~\ref{sec:InequivalentSubalgebras}.
In Sections~\ref{sec:reduction1QG} and~\ref{sec:reduction2QG}, systems obtained under reduction upon using the constructed one- and two-dimensional subalgebras are derived and investigated. 
The Rossby wave solutions in the quasi-geostrophic two-layer model are recovered as Lie invariant solutions.
Related boundary-value problems are discussed in Section~\ref{sec:boundaryQG}.
Section~\ref{sec:conclusionQG} summarizes the most important results of this paper.
Finally, in the Appendix~\ref{sec:extendedsetofsolutionsQG} we give a new symmetry interpretation of a method for finding exact solutions of linear systems of PDEs.

\section{The two-layer model}\label{sec:modelQG}

The first impulse to the investigation of baroclinic instability in the two-layer model was given in~\cite{phil54Ay}. Since two layers are considered, the model is capable of studying the interaction of the barotropic mode and the first baroclinic mode, which is sufficient for describing the basic mechanism of baroclinic instability. The model consists of two copies of the barotropic potential vorticity, evaluated at two different atmospheric levels~\cite{pedl87Ay}:
\begin{subequations}\label{twolayer}
\begin{align}
\begin{split}
    &Q^1_t+ \{\psi^1,Q^1\} = 0, \\
    &Q^2_t+ \{\psi^2,Q^2\} = 0,
\end{split}
\end{align}
where $\psi^1$ and $\psi^2$ are the stream functions in the upper and lower layer,
\begin{align}
\begin{split}
    &Q^1 = \nabla^2\psi^1 + \beta y - F(\psi^1-\psi^2), \\
    &Q^2 = \nabla^2\psi^2 + \beta y + F(\psi^1-\psi^2),
\end{split}
\end{align}
\end{subequations}
are the corresponding potential vorticities, 
with the constants $\beta$ and $F$ being the Rossby parameter and internal rotational Froude number, respectively, and
\[\{\psi^\bullet,Q^\bullet\}= \psi^\bullet_xQ^\bullet_y-\psi^\bullet_yQ^\bullet_x\]
is the usual Poisson bracket of the functions $\psi^\bullet$ and $Q^\bullet$ with respect to the space variables $x$ and~$y$.
For simplicity, we set the Froude numbers of the two layers to be equal, i.e. $F_1=F_2=F$, thereby assuming both layers to be of equal depth. Moreover, flat topography is assumed. For the configuration to be stably stratified, the lower layer must be denser than the upper layer. 

Here and in what follows subscripts of functions denote differentiation with respect to the corresponding variables 
and prime denotes differentiation with respect to~$t$.

Due to equivalence transformations of scaling and alternating signs in the class of equations of form~\eqref{twolayer}, it would be possible to set $F\in\{-1,0,1\}$ and $\beta\in\{0,1\}$. Since $F$ and $\beta$ are positive meteorological quantities, it would imply that $F=\beta=1$ but we will not use this scaling below.

\section{Lie symmetries}\label{sec:symmetriesQG}

In the case $F\beta\ne0$, which is the single subject of the present paper, system~\eqref{twolayer} admits the maximal Lie invariance algebra~$\mathfrak g$ generated by the following basis elements:
\begin{equation}\label{eq:symmetries}
    \p_t, \quad \p_y,\quad
    \XX(f)=f(t)\p_x - f'(t)y(\p_{\psi^1}+\p_{\psi^2}),\quad
    \FF=\p_{\psi^1}-\p_{\psi^2},\quad
    \ZZ(g)=g(t)(\p_{\psi^1}+\p_{\psi^2}),
\end{equation}
where $f$ and $g$ run through the set of real-valued functions of $t$. The computation of the algebra~$\mathfrak g$ was checked using the symbolic calculation packages  \texttt{DESOLV}~\cite{carm00Ay} and \texttt{muLie}~\cite{head93Ay}. For $\mathfrak g$ to really be a Lie algebra, additional restrictions on the smoothness of the parameter functions $f$ and $g$ should be imposed~\cite{fush94Ay}.

Physically, these generators are exponentiated to give time and north--south translation symmetry, generalized Galilean transformations with respect to $x$, as well as shifts and gauging of the stream functions.

The structure of the Lie symmetry algebra suggests the introduction of the new dependent variables $\psi^+=\psi^1+\psi^2$ and $\psi^-=\psi^1-\psi^2$. This change transforms the set of generators to
\[
    \p_t, \quad \p_y,\quad \XX(f)=f(t)\p_x - 2f'(t)y\p_{\psi^+},\quad \FF=2\p_{\psi^-},\quad \ZZ(g)=2g(t)\p_{\psi^+}.
\]
From the meteorological point of view, the new variables have a sound physical meaning. Since the two-dimensional wind fields at both the levels can be represented as the derivatives of the respective stream functions, it follows that the component $\psi^+$ gives rise to the mean of these fields. This part is usually referred to as the barotropic part of the flow. In turn, derivatives of $\psi^-$ give rise to the difference in the wind field between the two-layers. In meteorology, this difference is called the \emph{thermal wind}, which is a measure of the baroclinity of the fluid. Therefore, these variables are commonly used in the investigation of baroclinic instability, e.g.,~in the study of the linearized two-layer model~\cite{holt04Ay}. However, from the group-theoretical point of view, their usage is already suggested by the special form of Lie symmetry operators~\eqref{eq:symmetries}.

Using the variables $\psi^+$ and $\psi^-$, the model~\eqref{twolayer} is represented as
\begin{subequations}\label{twolayer2}
\begin{align}
\begin{split}
    &Q^+_t+ \frac12\left(\{\psi^+,Q^+\} + \{\psi^-,Q^-\}\right) = 0, \\
    &Q^-_t+ \frac12\left(\{\psi^+,Q^-\} + \{\psi^-,Q^+\}\right) = 0, \\
\end{split}
\end{align}
where
\begin{align}
\begin{split}
    & Q^+ = \nn^2\psi^+ + 2\beta y,\\
    & Q^- = \nn^2\psi^- - 2F\psi^-,
\end{split}
\end{align}
\end{subequations}
are the barotropic and baroclinic potential vorticities, respectively. In the present paper, we will use both the forms of the two-layer model, i.e.~employing both the ``layered variables'' $\psi^1$ and $\psi^2$ and the ``barotropic/baroclinic variables'' $\psi^+$ and $\psi^-$. The precise usage depends on whether we study linear or nonlinear submodels of the two-layer model.

\section[Complete point symmetry group]{Complete point symmetry group}\label{sec:CompletePointSymmetryGroupQG}

The complete point symmetry group of a system of differential equations, which includes both continuous and discrete symmetries, is conventionally calculated using the \emph{direct method}. The outlines of this method are quite simple. Supposing the most general form of a point transformation in the associated space of independent and dependent variables, one expresses all involved derivatives of the new (transformed) dependent variables via the old variables, substitutes the obtained expressions into the system written in terms of the transformed variables, then excludes the derivatives which are assumed constrained due to the system (\textit{principal derivatives}) and splits with respect to the unconstrained (\textit{parametric}) derivatives. As a result, one obtains a system of determining equations for point symmetry transformations, which is nonlinear in contrast to a similar system arising under application of the infinitesimal Lie method and, therefore, is much more complicated for solving. This is why different special techniques (the implicit representation for unknown functions, the combined splitting with respect to old and new variables, inverse expression of old derivative via new ones, a mapping of the initial system by a point transformation to another system, etc.) are applied to the derivation of determining equations and their a priori simplification \cite{bihl10Cy,popo10Cy,popo10Ay,prok05Ay}.

Here we aim to use an approach similar as described in~\cite{hydo00By}, which is based on the knowledge of the Lie symmetries of a given differential equation. This method rests on the fact that any point symmetry generates an automorphism of the maximal Lie invariance algebra. By factoring out the continuous symmetries from the whole point symmetry group, the discrete symmetries of the differential equation can be determined.

The computation of the complete point symmetry group can be considerably simplified by noting that any automorphism of a Lie algebra $\mathfrak g$ leaves invariant, by definition, all megaideals of $\mathfrak g$. Recall that a \textit{megaideal} of $\mathfrak g$ is a vector subspace of $\mathfrak g$ which is invariant under any transformation from the group of automorphisms of $\mathfrak g$~\cite{popo05Ay}. Therefore, by determining megaideals of the maximal Lie invariance algebra of the given differential equation and imposing the invariance condition of these megaideals under the push-forwards of vector fields associated with the point symmetries allows to restrict the general form of possible point symmetries already \textit{before} transforming the differential equation itself. After taking into account these initial restrictions, it is usually much simpler to proceed with the splitting of the variables in the transformed differential equation as described in the first paragraph.

In this section, the approach just outlined is demonstrated for the two-layer equation in barotropic/baroclinic variables.

\begin{theorem}\label{theo:PointSymGroupOfTwoLayerEq}
The point symmetry group~$G$ of system~\eqref{twolayer2} consists of transformations of the form
\begin{align*}
 &\tilde t = \ve_1t+T_0, \quad \tilde x=\ve_1x+f(t), \quad \tilde y=\ve_2y+Y_0, \\
 &\tilde \psi^- = \ve_3\psi^- + \Psi^-_0,\quad \tilde \psi^+ = \ve_2\psi^+-2\ve_1\ve_2f_ty+g(t), 
\end{align*}
where $\ve_i =\pm1$, $i=1,2,3$; $T_0,Y_0,\Psi^-_0\in\mathbb R$ and $f$ and $g$ are arbitrary smooth functions of $t$.
\end{theorem}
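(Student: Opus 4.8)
The strategy follows the megaideal-based approach sketched just above the theorem: reduce the direct method to a manageable linear splitting by first constraining the form of any admissible point transformation via automorphisms of $\mathfrak g$, then substitute into the system and finish. First I would identify a convenient chain of megaideals of $\mathfrak g$. From the commutation relations of the generators in barotropic/baroclinic variables one sees that $\FF=2\p_{\psi^-}$ spans a one-dimensional megaideal (it is, e.g., the image of certain iterated commutators, or characterized as the set of elements commuting with everything and not of the form $\ZZ(g)$-type after quotienting), that the span of all $\ZZ(g)=2g(t)\p_{\psi^+}$ together with the $\p_{\psi^+}$-components is a megaideal, that $\langle \XX(f), \ZZ(g), \FF\rangle$ is a megaideal (it is the derived algebra, roughly), and that $\langle\p_y\rangle$ is distinguished. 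Since any point symmetry $\Phi\in G$ pushes forward each vector field $v\in\mathfrak g$ to $\Phi_*v\in\mathfrak g$ and preserves every megaideal, writing the general transformation as $\tilde t=T(t,x,y,\psi^+,\psi^-)$, etc., and imposing $\Phi_*\p_{\psi^-}\in\langle\FF\rangle$, $\Phi_*\p_{\psi^+}\in\langle\ZZ\text{-part}\rangle$, $\Phi_*\p_x\in\langle\XX,\ZZ,\FF\rangle$, $\Phi_*\p_t$, $\Phi_*\p_y$ in the appropriate subspaces, forces: $T=T(t)$, $\tilde y$ depends only on $y$ (and $t$), $\tilde x$ is affine in $x$ with $x$-coefficient depending only on $t$, $\psi^-$ enters only $\tilde\psi^-$ (and $\tilde\psi^-$ is affine in $\psi^-$), and $\tilde\psi^+$ is affine in $\psi^+$. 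One also gets that the Jacobian factors appropriately and, comparing $\Phi_*\p_t$ against $\p_t$ modulo lower megaideals, that $T$ is affine in $t$: $\tilde t=\ve_1 t+T_0$; similarly $\tilde y=\ve_2 y+Y_0$ with the coefficient a nonzero constant, and rescaling freedom in the class forces these coefficients to be $\pm1$ once the equation is used, or one keeps them as nonzero constants $a_1,a_2,a_3$ at this stage.

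Next I would substitute the already-restricted ansatz
\[
\tilde t=a_1t+T_0,\quad \tilde x=a_1 x+f(t),\quad \tilde y=a_2 y+Y_0,\quad \tilde\psi^-=a_3\psi^-+\Psi^-_0,\quad \tilde\psi^+=a_2\psi^++h(t,x,y)
\]
(the scalings on $\tilde x$ and $\tilde t$ tied together because $\tilde x_x/\tilde t_t$ must reproduce the $f(t)\p_x$ structure; I would double-check the exact coupling during the write-up) into system~\eqref{twolayer2}. The second (baroclinic) equation is the cleaner one to handle first: $Q^-=\nabla^2\psi^--2F\psi^-$ transforms with $\tilde\nabla^2=a_1^{-2}\nabla^2$ on functions of the old variables through $\tilde x$, while $\psi^-$ rescales by $a_3$; demanding form-invariance of $Q^-_t+\tfrac12(\{\psi^+,Q^-\}+\{\psi^-,Q^+\})=0$ pins down the relations among $a_1,a_2,a_3$ and kills any $x,y$-dependence one might have tried to sneak into the $\psi^-$-transformation beyond the constant shift. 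Then the first (barotropic) equation, with $Q^+=\nabla^2\psi^++2\beta y$, determines $h(t,x,y)$: the $\beta y$ term together with the $\nabla^2\psi^+$ term forces $h$ to be linear in $y$ with coefficient $-2a_1\p_t f$ (this is exactly the generalized-Galilean piece $-2\ve_1\ve_2 f_t y$ after fixing signs) plus an arbitrary $g(t)$, and fixes $a_2$ relative to $a_1$ through the coefficient of $\beta y$. Combining all the constraints collapses the multiplicative constants to signs $\ve_i=\pm1$, giving exactly the stated group.

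The main obstacle — and the step I would be most careful with — is the very first reduction: extracting the full list of useful megaideals of $\mathfrak g$ and verifying that imposing their invariance really forces the transformation to have the near-final triangular, fiber-preserving form, rather than merely constraining it partially. In particular one must argue carefully that $\tilde t$ cannot depend on $x,y,\psi^\pm$ and that $\psi^-$ cannot appear in $\tilde t,\tilde x,\tilde y,\tilde\psi^+$; this uses that the $\FF$-megaideal and the $\ZZ$-type megaideal are "vertical" in a way that any automorphism must respect, plus the fact that the functional parameters $f,g$ make those megaideals infinite-dimensional, which rigidly constrains where $\p_{\psi^-}$ and $\p_{\psi^+}$ can be sent. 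Once the transformation is known to be point-wise affine and appropriately triangular, the remaining substitution into~\eqref{twolayer2} and splitting with respect to the parametric derivatives is routine bookkeeping; I would organize it by first exploiting the baroclinic equation (fewer terms, decoupled Laplacian structure) and only then the barotropic one (which carries the $\beta y$ inhomogeneity responsible for the $-2\ve_1\ve_2 f_t y$ term in $\tilde\psi^+$). Finally I would note that the continuous part of $G$ so obtained is generated exactly by the fields in~\eqref{eq:symmetries}, so that the only genuinely discrete content is the three independent sign flips $\ve_1,\ve_2,\ve_3$, consistent with the automorphism count.
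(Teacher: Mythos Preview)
Your overall strategy is the paper's: constrain the general point transformation by requiring that its push-forward preserve megaideals of~$\mathfrak g$, then substitute the reduced ansatz into system~\eqref{twolayer2} and split with respect to parametric derivatives. Two concrete corrections are needed before this plan will go through.

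First, your megaideal list is off. The derived algebra is $\mathfrak g'=\langle\XX(f),\ZZ(g)\rangle$, not $\langle\XX(f),\ZZ(g),\FF\rangle$, and neither $\langle\FF\rangle$ nor $\langle\p_y\rangle$ is a megaideal in any obvious way. The paper instead works with the chain built from the nil-radical $\mathfrak n=\langle\p_y,\XX(f),\FF,\ZZ(g)\rangle$, its derivative $\mathfrak n'=\langle\ZZ(g)\rangle$, the center $\mathfrak z=\langle\XX(1),\FF,\ZZ(1)\rangle$, and the intersections $\mathfrak z\cap\mathfrak g'=\langle\XX(1),\ZZ(1)\rangle$, $\mathfrak z\cap\mathfrak n'=\langle\ZZ(1)\rangle$. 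In particular, the image of $\FF$ under an automorphism is only known to lie in $\mathfrak z$, i.e.\ it can pick up $\XX(1)$- and $\ZZ(1)$-components; those are killed only later by the equation itself.

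Second, and more importantly, the intermediate ansatz you write down after the megaideal step is already too restrictive. What the megaideal constraints actually give (cf.\ the paper's display~\eqref{eq:appendix1}) still contains terms $X_2y+X_3\psi^-$ in $\tilde x$, a $Y_2t$ term in $\tilde y$, $\Psi^-_2y+\Psi^-_3t$ in $\tilde\psi^-$, and $\Psi^+_2x+\Psi^+_3\psi^-$ in $\tilde\psi^+$, together with a single relation $X_1Y_1=\Psi^+_1T_1$. These residual couplings are eliminated only by substituting into~\eqref{twolayer2} and splitting; if you start from your shorter form you have silently assumed away part of what must be proved. Once you carry these terms along, your plan of attacking the baroclinic equation first (to force $X_1^2=Y_1^2=1$ and kill $\Psi^+_3$, $\Psi^-_2$, $\Psi^-_3$) and then the barotropic equation (to force $\Psi^-_1=\pm1$, $Y_2=\Psi^+_2=0$ and fix the $-2\ve_1\ve_2 f_t y$ term) is exactly how the paper finishes.
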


\begin{proof}
Recall that the maximal Lie invariance algebra of system~\eqref{twolayer2} is the infinite dimensional algebra
$\mathfrak g=\langle\p_t,\p_y,\XX(f),\FF,\ZZ(g)\rangle$,
where $f$ and $g$ run through the set of smooth functions of $t$.
It is a solvable algebra since $\mathfrak g'=\langle\XX(f),\ZZ(g)\rangle$ and hence $\mathfrak g''=\{0\}$.
In other words, the radical~$\mathfrak r$ of~$\mathfrak g$ coincides with the entire~$\mathfrak g$.
The algebra~$\mathfrak g$ has the nontrivial center $\mathfrak z=\langle\XX(1),\FF,\ZZ(1)\rangle$.

The nil-radical of~$\mathfrak g$ is the ideal $\mathfrak n=\langle\p_y,\XX(f),\FF,\ZZ(g)\rangle.$
Indeed, this ideal is a nilpotent subalgebra of~$\mathfrak g$ since
$\mathfrak n^{(2)}=\mathfrak n'=[\mathfrak n,\mathfrak n]=\langle\ZZ(g)\rangle$ and
$\mathfrak n^{(3)}=[\mathfrak n,\mathfrak n']=0$.
A unique ideal of~$\mathfrak g$ properly containing~$\mathfrak n$ is the entire algebra~$\mathfrak g$ itself, which is not nilpotent.
This means that $\mathfrak n$ is the maximal nilpotent ideal.

In the calculations of~$G$, we use the following megaideals of $\mathfrak g$:
the entire algebra~$\mathfrak g$,
the derived algebra $\mathfrak g'$,
the nil-radical~$\mathfrak n$,
its derivative~$\mathfrak n'$,
the center $\mathfrak z$
and their proper intersections,
$\mathfrak z\cap\mathfrak g'=\langle\XX(1),\ZZ(1)\rangle$ and
$\mathfrak z\cap\mathfrak n'=\langle\ZZ(1)\rangle$.

For the general point transformation
\[
 \mathcal T\colon\quad (\tilde t,\tilde x,\tilde y, \tilde \psi^-,\tilde \psi^+) = (T,X,Y,\Psi^-,\Psi^+),
\]
where $T$, $X$, $Y$, $\Psi^-$, $\Psi^+$ are functions of $t$, $x$, $y$, $\psi^-$, $\psi^+$ with the Jacobian not equal to zero, to be a point symmetry of system~\eqref{twolayer2}, the associated push-forward $\mathcal T_*$ of vector fields must be an automorphism of $\mathfrak g$. In particular,
$\mathcal T_*\mathfrak g =\mathfrak g$,
$\mathcal T_*\mathfrak g'=\mathfrak g'$,
$\mathcal T_*\mathfrak n =\mathfrak n$,
$\mathcal T_*\mathfrak n'=\mathfrak n'$,
$\mathcal T_*\mathfrak z =\mathfrak z$,
$\mathcal T_*(\mathfrak z\cap\mathfrak g')=\mathfrak z\cap\mathfrak g'$
and
$\mathcal T_*(\mathfrak z\cap\mathfrak n')=\mathfrak z\cap\mathfrak n'$.

Investigating the restrictions on $\mathcal T$ imposed by the invariance of $\mathfrak z\cap\mathfrak n'$ under~$\mathcal T_*$, we have
\[
\mathcal T_* \ZZ(1)= 2(T_{\psi^+}\p_{\tilde t} + X_{\psi^+}\p_{\tilde x}+Y_{\psi^+}\p_{\tilde y}+\Psi^-_{\psi^+}\p_{\tilde \psi^-}+\Psi^+_{\psi^+}\p_{\tilde \psi^+})=\tilde\ZZ(a),
\]
where $a=\const$.
This equation implies that $T_{\psi^+}=X_{\psi^+}=Y_{\psi^+}= \Psi^-_{\psi^+}=0$, $\Psi^+_{\psi^+}=a=\const$,
and $a\ne0$.
Then, from the transformation of the elements of~$\mathfrak n'$ we conclude
\[
\mathcal T_* \ZZ(g) = 2ag\p_{\tilde \psi^+}=\tilde\ZZ(\tilde g^g),
\]
where $\tilde g^g$ is a smooth function of $\tilde t$ related to~$g$.
Comparing coefficients, we find that $ag(t)=\tilde g^g(T)$.
As $g$ is an arbitrary smooth function of $t$, we can set $g=t$ to obtain $2at=\tilde g^t(T)$.
Because of $a\ne0$ and hence $\tilde g^t_{\tilde t}\ne0$, this implies that $T=T(t)$ and $T_t\ne0$.

In a similar manner, the condition
\[
\mathcal T_* \XX(1) = \tilde \XX(b_1) + \tilde \ZZ (b_2),
\]
with $b_1,b_2=\const$ follows from the invariance of $\mathfrak z\cap\mathfrak g'$ with respect to $\mathcal T_*$.
It is spit into the equations $Y_x=\Psi^-_x = 0$, $X_x=b_1=\const$ and $\Psi^+_x = 2b_2=\const$, where $b_1\ne0$.
The invariance of $\mathfrak g'$ implies the condition
\[
 \mathcal T_* \XX(f) = \tilde \XX(\tilde f^f) + \ZZ(\tilde g^f),
\]
where $\tilde f^f$ and $\tilde g^f$ are smooth functions of $\tilde t$ related to~$f$.
Comparing coefficients in the last condition immediately gives the additional equations
$b_1f=\tilde f^f(T)$ and $b_2f-af_ty=-\tilde f^f_{\tilde t}(T)Y+\tilde g^f(T)$.
Taking into account the equality $\tilde f^f_{\tilde t}=b_1f_t/T_t$, we obtain that
$Y=Y^1(t)y+Y^0(t)$, where $Y^1=aT_t/b_1\ne0$ and the precise expression of~$Y^0$ is not essential for this time.

The push-forward of the remaining basis operator $\FF$ of the center~$\mathfrak z$ by~$\mathcal T$ implies
\[
 \mathcal T_* \FF = \tilde \XX(c_1) + c_2\FF + \tilde \ZZ(c_3),
\]
where again $c_1,c_2,c_3=\const$. From this condition, we can conclude that $X_{\psi^-} = c_1=\const$, $\Psi^-_{\psi^-} = c_2=\const$ and $\Psi^+_{\psi^-} = 2c_3=\const$.

It remains to investigate the restrictions imposed by the push-forwards of the basis operators $\p_t$ and $\p_y$.
The operator~$\p_t$ does not lie in the above proper megaideals. Hence, its push-forward can be represented only as a general element of~$\mathfrak g$:
\[
 \mathcal T_* \p_t = d_1\p_{\tilde t} + d_2\p_{\tilde y} + d_3\tilde \FF + \tilde \XX(\tilde f) + \tilde \ZZ(\tilde g)
\]
for real constants $d_1$, $d_2$, $d_3$ and smooth functions $\tilde f$ and $\tilde g$  of~$\tilde t$.
It is then straightforward to find $T_t=d_1=\const$, $Y_t=d_2=\const$ and thus $Y^1=\const$ and $Y^0=d_2t+d_4$, where $d_4=\const$.
Moreover, $\Psi^-_t=2d_3=\const$, $X_t=\tilde f(T)$ is a function of $t$ and $\Psi^+_t=-2\tilde f_{\tilde t}(T)\tilde y + \tilde g(T)$.

Since the operator~$\p_y$ belongs to the megaideal~$\mathfrak n$, its push-forward by~$\mathcal T$ should take the form
\[
\mathcal T_* \p_y=e_1\p_{\tilde y}+e_2\FF+\tilde \XX(\tilde f^y)+\tilde\ZZ(\tilde g^y)
\]
with $e_1,e_2=\const$ and $\tilde f^y$ and $\tilde g^y$ again being smooth functions of~$\tilde t$.
Comparing coefficients implies $Y^1=e_1$, $\Psi^-_y=2e_2$, $\Psi^+_y=-2\tilde f^y_{\tilde t}(T)\tilde y + 2\tilde g^y(T)$
and $X_y=\tilde f^y(T)$, which is a function of~$t$.
As $X_{ty}=0$, $X_y=\const$ holds. Therefore, $\tilde f^y_{\tilde t}=0$ and $\Psi^+_y$ depends only on~$t$.

Collecting all constraints obtained so far and re-denoting the involved values, we obtain the representation of the point transformations
inducing automorphisms of~$\mathfrak g$:
\begin{align}\label{eq:appendix1}
\begin{split}
 &T = T_1t  + T_0, \quad X = X_1x+X_2y + X_3\psi^- + f(t), \quad Y=Y_1y+Y_2t+Y_0, \\
 &\Psi^- = \Psi^-_1\psi^- + \Psi^-_2y + \Psi^-_3t+\Psi^-_0, \\
 &\Psi^+ = \Psi^+_1\psi^+ + \Psi^+_2x+\Psi^+_3\psi^- + \varphi(t)y + g(t),
\end{split}
\end{align}
where $T_0$, $T_1$, $X_1$, $X_2$, $X_3$, $Y_0$, $Y_1$, $Y_2$, $\Psi^-_0$, \dots, $\Psi^-_3$, $\Psi^+_1$, \dots, $\Psi^+_3$ are constants, $T_1X_1Y_1\Psi^-_1\Psi^+_1\ne0$, $X_1Y_1=\Psi^+_1T_1$, $f$~and~$g$ are smooth functions of~$t$ and $\varphi_t=-2Y_1f_{tt}$.

We now have to take into account that the transformation $\mathcal T$ is a point symmetry of system~\eqref{twolayer2}.
Therefore, we should find explicit expressions for the derivatives of $\tilde\psi^-$ and $\tilde\psi^+$
with respect to the new variables $\tilde t$, $\tilde x$, $\tilde y$.
In view of the representation~\eqref{eq:appendix1}, the transformation rules for the partial derivative operators read
\begin{align*}
 &\p_{\tilde t} = \frac{1}{T_1}\left(\DD_t - \frac{\DD_tX}{\DD_xX}\DD_x - \frac{Y_2}{Y_1}\left(\DD_y - \frac{\DD_yX}{\DD_xX}\DD_x\right)\right), \\
 &\p_{\tilde x} = \frac{1}{\DD_xX}\DD_x, \quad \p_{\tilde y} = \frac{1}{Y_1}\left(\DD_y - \frac{\DD_yX}{\DD_xX}\DD_x\right), 
\end{align*}
where $\DD_t$, $\DD_x$ and $\DD_y$ denote the operators of total differentiation with respect to $t$, $x$ and $y$, respectively.
The derivative $\psi^+_{txy}$ can arise only in the expression for $\tilde \psi^+_{\tilde t \tilde y\tilde y}$:
\[
\tilde \psi^+_{\tilde t \tilde y\tilde y} = -\frac{2}{T_1Y^2_1}\frac{\DD_yX}{\DD_xX}\psi^+_{txy} +\cdots.
\]
Since in the first equation of~\eqref{twolayer2} there is no term with $\psi^+_{txy}$, we consequently have $\DD_yX=0$, which implies $X_2=X_3=0$.
Using this result, the transformation rules for the partial derivative operators are essentially simplified:
\[
\p_{\tilde t} = \frac{1}{T_1}\left(\p_t - \frac{f_t}{X_1}\p_x - \frac{Y_2}{Y_1}\p_y\right), \quad
\p_{\tilde x} = \frac{1}{X_1}\p_x, \quad
\p_{\tilde y} = \frac{1}{Y_1}\p_y.
\]
Upon the substitution $\psi^+_{xx} = \nn^2 \psi^+-\psi^+_{yy}$, we obtain
\[
\tilde\nn^2 \tilde \psi^+_{\tilde t} = \frac{1}{T_1}\left(\frac{1}{X_1^2}\nn^2 \psi^+_t + \left(\frac{1}{Y_1^2}-\frac{1}{X_1^2}\right)\psi^+_{tyy}\right) +\cdots.
\]
Since there is no extra term $\psi^+_{tyy}$ in the first equation of~\eqref{twolayer2}, we have $X_1^2=Y_1^2=:\sigma\ne0$.

To plug the transformed variables into system~\eqref{twolayer2}, it is convenient to write down the expressions for the transformed potential vorticities $\tilde Q^-$ and $\tilde Q^+$:
\begin{align*}
 &\tilde Q^+ = \frac{\Psi^+_1}{\sigma}(Q^+ - 2\beta y) + \frac{\Psi^+_3}{\sigma}(Q^-+2F\psi^-)+ 2\beta(Y_1x+Y_2t+Y_0), \\
 &\tilde Q^- = \frac{\Psi^-_1}{\sigma}(Q^- + 2F\psi^-) - 2F(\Psi^-_1\psi^- + \Psi^-_2y+\Psi^-_3t + \Psi^-_0).
\end{align*}
The further restrictions on the transformation $\mathcal T$ can be found by splitting the resulting equations
with respect to the variables $t$, $x$, $y$, $\psi^\pm_x$,  $\psi^\pm_y$, $Q^\pm_x$ and $Q^\pm_y$.

We start with the restrictions imposed from the transformation of the second equation of system~\eqref{twolayer2}.
The term with $\psi^-_xQ^-_y$ arises after the expansion of $\tilde \psi^+_{\tilde x}\tilde Q^-_{\tilde y}+\tilde \psi^-_{\tilde x}\tilde Q^+_{\tilde y}$.
As $\Psi^-_1\ne0$, the corresponding coefficient, which is equal to $\Psi^+_3\Psi^-_1/(\sigma X_1Y_1)$, vanishes if and only if $\Psi^+_3=0$.
A term with $\psi^-_t$ is contained only in the expression for $\tilde Q^-_{\tilde t}$.
The corresponding coefficient $2FT_1^{-1}\Psi^-_1(\sigma^{-1}-1)$ also must be equal to zero, i.e., $\sigma = 1$ and hence $X_1=\ve_1=\pm1$ and $Y_1=\ve_2=\pm 1$.
Splitting in a similar way with respect to $\psi^+_x$ and $\psi^-_y$, we respectively find that $\Psi^-_2=0$ and $Y_1=\Psi^+_1$.
As we already know that $Y^1=\Psi^+_1T_1/X_1$, this implies that $T_1=X_1=\ve_1$.

From these restrictions, we can conclude the following form of the transformations:
\begin{align*}
 &T = \ve_1t+T_0, \quad X=\ve_1x+f(t), \quad Y=\ve_2y+Y_2t+Y_0, \\
 &\Psi^- = \Psi^-_1\psi^- + \Psi^-_3t+\Psi^-_0,\quad \Psi^+ = \ve_2\psi^+ + \Psi^+_2x + \varphi(t)y+g(t), \\
 &\tilde Q^- = \Psi^-_1 Q^- - 2F(\Psi^-_3t+\Psi^-_0), \quad \tilde Q^+ = \ve_2 Q^+ + 2\beta(Y_2t+Y_0).
\end{align*}
Substituting these transformations into the second equation of system~\eqref{twolayer2}, we find
\begin{align*}
 &\frac{\Psi^-_1}{\ve_1}\left(Q^-_t - \frac{f_t}{\ve_1}Q^-_x - \frac{Y_2}{\ve_2}Q^-_y\right) - \frac{2F}{\ve_1}\Psi^-_3 + \frac{1}{2\ve_1\ve_2}(\ve_2\psi^+_x+\Psi^+_2)\Psi^-_1Q^-_y -  (\ve_2\psi^+_y+\varphi)\Psi^-_1Q^-_x)\\
 &+\frac{\Psi^-_1}{2\ve_1}(\psi^-_xQ^+_y-\psi^-_yQ^+_x) = \frac{\Psi^-_1}{\ve_1}\left(Q^-_t  +\frac12(\psi^+_xQ^-_y -  \psi^+_y\Psi^-_1Q^-_x)+\frac12(\psi^-_xQ^+_y-\psi^-_yQ^+_x)\right).
\end{align*}
Splitting of this equation immediately gives $\Psi^-_3=0$, $\varphi = -2\ve_1^{-1}\ve_2f_t$ and $Y_2 = \tfrac{1}{2}\Psi^+_2$.

In a similar fashion, plugging these transformations into the first equation of system~\eqref{twolayer2} we obtain
\begin{align*}
 &\frac{\ve_2}{\ve_1}\left(Q^+_t - \frac{f_t}{\ve_1}Q^+_x - \frac{Y_2}{\ve_2}Q^+_y\right) + \frac{2\beta}{\ve_1}Y_2 + \frac{1}{2\ve_1\ve_2}((\ve_2\psi^+_x+\Psi^+_2)\ve_2Q^+_y -  (\ve_2\psi^+_y+\varphi)\ve_2 Q^+_x)\\
 &+\frac{(\Psi^-_1)^2}{2\ve_1\ve_2}(\psi^-_xQ^-_y-\psi^-_yQ^-_x) = \frac{\ve_2}{\ve_1}\left(Q^+_t + \frac12(\psi^+_xQ^+_y - \psi^+_yQ^+_x)+\frac12(\psi^-_xQ^-_y-\psi^-_yQ^-_x)\right).
\end{align*}
The symmetry condition implies $\Psi^-_1=\ve_3=\pm1$, $Y_2=0$ and consequently $\Psi^+_2=0$. This completes the proof of the theorem.
\end{proof}

\begin{remark}
The continuous transformations generated by elements of the center $\mathfrak z$ and only such transformations from the point symmetry group of system~\eqref{twolayer2} induce the identical automorphism of the algebra~$\mathfrak g$.
\end{remark}

\begin{remark}
Comparing the results presented in Theorem~\ref{theo:PointSymGroupOfTwoLayerEq} and Section~\ref{sec:symmetriesQG} leads to the conclusion that besides Lie point symmetries system~\eqref{twolayer2} admits discrete point symmetries.
The group of discrete symmetries is generated by the mirror symmetries
$(t,x,y,\psi_1,\psi_2)\mapsto (-t,-x,y,\psi_1,\psi_2)$, $(t,x,y,\psi_1,\psi_2)\mapsto (t,x,-y,-\psi_1,-\psi_2)$ and $(t,x,y,\psi_1,\psi_2)\mapsto(t,x,y,\psi_2,\psi_1)$.
Under the change of the ``barotropic/baroclinic variables'' $(\psi^+,\psi^-)$ by the ``layered variables'' $(\psi^1,\psi^2)$, these discrete symmetries are changed to the transformations mapping $(t,x,y,\psi^+,\psi^-)$ to $(-t,-x,y,\psi^+,\psi^-)$, $(t,x,-y,-\psi^+,-\psi^-)$ and $(t,x,y,\psi^+,-\psi^-)$, respectively.
They exhaust the independent discrete symmetries of system~\eqref{twolayer2} up to mutual composing and composing with continuous symmetries.
Although the proof of this claim involves cumbersome calculations, 
the discrete symmetries are not essential for further consideration and hence will be neglected here. 
At the same time, they may be important for other purposes~\cite{bihl09By,hydo00By}.
\end{remark}

\section{Inequivalent subalgebras}\label{sec:InequivalentSubalgebras}

To classify the inequivalent subalgebras of the maximal Lie invariance algebra~$\mathfrak g$ of the system~\eqref{twolayer} (resp. the system~\eqref{twolayer2}), we need the adjoint representation of the corresponding Lie group on~$\mathfrak g$. See, e.g.,~\cite{fush94Ay,olve86Ay,ovsi82Ay} for details of classification techniques. We list only the nontrivial actions associated with basis elements of~$\mathfrak g$:
\begin{align*}
    &\Ad(e^{\ve \ZZ(g)})\p_t = \p_t + \ve\ZZ(g'), & & \Ad(e^{\ve\p_t})\ZZ(g) = Z(g(t-\ve)), \\
    &\Ad(e^{\ve \XX(f)})\p_t = \p_t + \ve\XX(f'), & & \Ad(e^{\ve\p_t})\XX(f) = X(f(t-\ve)), \\
    &\Ad(e^{\ve \XX(f)})\p_y = \p_y - \ve\ZZ(f'), & & \Ad(e^{\ve\p_y})\XX(f) = \XX(f) + \ve\ZZ(f').
\end{align*}
The classification of inequivalent one-dimensional subalgebras is straightforward. An optimal set of such subalgebras reads
\begin{equation}\label{eq:algebra1}
    \mathcal A^1_1=\langle \p_t + a\p_y + b\FF\rangle, \qquad \mathcal A^1_2=\langle \p_y + \XX(f) + b\FF\rangle,\qquad \mathcal A^1_3=\langle \XX(f) + \ZZ(g)+b\FF\rangle,
\end{equation}
where $a,b \in \mathbb{R}$ and $f$ and $g$ are arbitrary functions of $t$.

The classification of the two-dimensional subalgebras yields the following list:
\begin{gather}\label{eq:algebra2}
\begin{split}
    &\mathcal A^2_1=\langle\p_t+\kappa\FF, \p_y+\XX(\nu)+\ZZ(\mu)+\rho\FF\rangle, \\
    &\mathcal A^2_2=\langle \p_t + \nu\p_y+\kappa\FF, \XX(e^{\sigma t})+\ZZ(\nu\sigma te^{\sigma t})\rangle, \sigma\ne0 \\
    &\mathcal A^2_{-1}=\langle \p_t + \nu\p_y+\kappa\FF, \ZZ(e^{\sigma t})\rangle, \\
    &\mathcal A^2_3=\langle \p_t + \nu\p_y+\kappa\FF, \XX(1)+\ZZ(\mu)+\rho\FF\rangle, \\
    &\mathcal A^2_{-2}=\langle \p_t + \nu\p_y+\kappa\FF, \ZZ(1)+\rho\FF\rangle, \quad
     \mathcal A^2_{-3}=\langle\p_t+\nu\p_y, \FF\rangle, \\
    &\mathcal A^2_4=\langle \p_y + \XX(f)+\kappa\FF, \XX(1)+\ZZ(g)+\rho\FF\rangle,\ \kappa \rho=0,  \\
    &\mathcal A^2_{-4}=\langle \p_y + \XX(f), \ZZ(g)+\FF\rangle,  \quad \mathcal A^2_{-5}=\langle \p_y + \XX(f)+\kappa\FF, \ZZ(g)\rangle, \\
    &\mathcal A^2_{-6}=\langle\XX(f^1)+\ZZ(g^1)+\kappa\FF,\XX(f^2)+\ZZ(g^2)+\rho\FF\rangle,
\end{split}
\end{gather}
where $a$, $b$, $\kappa$, $\mu$, $\nu$, $\rho$, $\sigma=\const$ and $f$, $f^1$, $f^2$, $g$, $g^1$ and $g^2$ are arbitrary functions of $t$. In the final subalgebra the tuples $(f^1,g^1,\kappa)$ and $(f^2,g^2,\rho)$ must be linearly independent for the subalgebra to really be two-dimensional. By the same reason, the parameter function $g$ is not identically equal to zero in $\mathcal A^2_{-5}$.

In fact, the above subalgebras are not single subalgebras but rather represent parameterized classes of subalgebras. This is why it would be beneficial to indicate the list of parameters in the notation of the corresponding algebras, but for the sake of brevity we omit this whenever possible. For the classes $\mathcal A^2_4$, $\mathcal A^2_{-4}$, $\mathcal A^2_5$, $\mathcal A^2_{-5}$ and $\mathcal A^2_{-6}$ the adjoint actions and linear combinations of basis elements induce equivalence relations on the corresponding sets of parameters. For example, the subalgebras $\mathcal A^2_{-4}(f,g)$ and $\mathcal A^2_{-4}(\tilde f,\tilde g)$ are equivalent if and only if $\tilde f(t)=f(t-\ve)$ and $\tilde g(t)=g(t-\ve)$ for some $\ve\in\mathbb R$.

\section{Invariant reduction with one-dimensional subalgebras}\label{sec:reduction1QG}

We present the complete list of submodels obtained under reduction using the list~\eqref{eq:algebra1}. For each submodel, we again determine its Lie symmetries, thereby seeking for hidden symmetries of the initial model. For a general discussion of the problem of hidden symmetries, see e.g.~\cite{abra06Ay}. Throughout this section $v^1$, $v^2$, $v^+$ and $v^-$ will be assumed as functions of $p$ and $q$.

\subsection{Subalgebra \texorpdfstring{$\mathcal A^1_1$}{A11}}

A suitable ansatz for reduction of~\eqref{twolayer} under the first subalgebra of~\eqref{eq:algebra1} is given by $\psi^1=v^1+bt$, $\psi^2=v^2-bt$, where $p=x$, $q=y-at$. The corresponding reduced equations read
\begin{align*}
    &aw^1_q-Fa(v^1_q-v^2_q)+2Fb - v^1_p(w^1_q+\beta+Fv^2_q)+v^1_q(w^1_p+Fv^2_p)=0, \\
    &aw^2_q+Fa(v^1_q-v^2_q)-2Fb - v^2_p(w^2_q+\beta+Fv^1_q)+v^2_q(w^2_p+Fv^1_p)=0,
\end{align*}
where $w^i=v^i_{pp}+v^i_{qq}$, $i=1,2$. The maximal Lie invariance algebra of this system is $\mathfrak g_1=\langle\p_p,\p_q,\p_{v^1},\p_{v^2}\rangle$. All operators from the algebra $\mathfrak g_1$ are induced by Lie symmetry operators of the original system~\eqref{twolayer} and hence there are no purely hidden symmetries. This is why we do not have to further reduce the above system by using the Lie method. The Lie reductions of the reduced system with respect to one-dimensional subalgebras of~$\mathfrak g_1$ are equivalent to Lie reductions of system~\eqref{twolayer} with respect to one of the listed two-dimensional subalgebras of~$\mathfrak g$. The two-dimensional reductions of system~\eqref{twolayer} are exhaustively discussed in Section~\ref{sec:reduction2QG}.

\subsection{Subalgebra \texorpdfstring{$\mathcal A^1_2$}{A12}}\label{sec:SubalgebraA12}

\noindent\textbf{Reduction using $\mathcal A^1_2$.} An ansatz associated with this subalgebra reads $\psi^1 = v^1 - \tfrac{1}{2}f'y^2+by$ and $\psi^2=v^2-\tfrac{1}{2}f'y^2-by$, where $p=x-f(t)y$ and $q=t$. It reduces system~\eqref{twolayer} to the system
\begin{align*}
\begin{split}
    &(Hv^1_{pp})_q - f'' - F(v^1_q-v^2_q) -bF(v^1_p+v^2_p)-bHv^1_{ppp} + \beta v^1_p = 0, \\
    &(Hv^2_{pp})_q - f'' + F(v^1_q-v^2_q) +bF(v^1_p+v^2_p)+bHv^2_{ppp} + \beta v^2_p = 0,
\end{split}
\end{align*}
where it was convenient to introduce the new notation \[H=1+f^2.\]

To simplify the reduced system, we use the above mentioned barotropic/baroclinic variables, which is particularly obvious for this submodel, since it is a linear system of differential equations. By introducing $w=v^1+v^2$ and $v=v^1-v^2$ we are able to rewrite the resulting system via:
\begin{align}\label{eq:redsub0}
\begin{split}
    &(Hw_{pp})_q- 2f''-bHv_{ppp}+\beta w_p  = 0, \\
    &(Hv_{pp})_q-2Fv_q-2bFw_p -2bHw_{ppp}+\beta v_p = 0.
\end{split}
\end{align}
Note that system~\eqref{eq:redsub0} may be derived directly by means of reduction of~\eqref{twolayer2} under the ansatz $\psi^+=w(p,q)-f'y^2$ and $\psi^-=v(p,q)+2by$, where $p$ and $q$ are defined as above.

The resulting system is now simplified in a way similar as presented in~\cite{bihl09Ay}. Namely, we integrate once the first equation with respect to $p$ yielding
\[
    (Hw_p)_q - 2f''p-bHv_{pp}+\beta w +h(q) = 0,
\]
where $h$ is an arbitrary function of $q=t$. Then we apply the transformations of the unknown functions
\[
    w = \hat w - 2\frac{(Hf'')'}{\beta^2} + \frac{2f''p}{\beta}- \frac{h}{\beta}, \quad v = \hat v - \frac{2bf'}{\beta}
\]
and obtain the following system:
\begin{align}\label{eq:redsub}
\begin{split}
    &(H\hat w_p)_q  + \beta \hat w - bH\hat v_{pp} = 0, \\
    &(H\hat v_{pp})_q -2F\hat v_q+\beta \hat v_p- 2b(H\hat w_{ppp}+F\hat w_p) = 0.
\end{split}
\end{align}
First of all, we determine the Lie symmetries of this system. As system~\eqref{eq:redsub} in fact is a class of systems parameterized by the arbitrary function $f=f(q)$ and the arbitrary constant~$b$, it is necessary to solve a group classification problem \cite{ovsi82Ay,popo04Ay,popo10Ay}. That is, for the complete description of Lie symmetries it is necessary to seek for possible extensions of the Lie invariance algebras for special values of the parameters~$f$ and~$b$, respectively. Recall that $\beta$ and $F$ are constant parameters which can be set to $1$, so it is not required to also take into account the classification problem with respect to~$\beta$ and~$F$.

\medskip

\noindent\textbf{Group classification of the reduced systems.} Conventionally, the first step in the procedure of group classification is the identification of the kernel $G^{\mathrm{ker}}$ of the maximal Lie invariance groups of systems from class~\eqref{eq:redsub}, i.e.\ the group which is admitted for any value of $f$ and $b$. The Lie algebra $\mathfrak g^{\mathrm{ker}}$ corresponding to $G^{\mathrm{ker}}$ can be obtained by solving the determining equations for Lie symmetries under the assumption of arbitrariness of~$f$ and~$b$. The part of the determining equations not including~$f$ and~$b$ can be immediately integrated yielding
\[
    \xi^p = ap+c, \quad \xi^q = aq+d, \quad \eta^{\hat v} =k_1\hat v+z^1(p,q), \quad \eta^{\hat w} = k_2\hat w+z^2(p,q),
\]
which are the coefficients of the most general infinitesimal generator of Lie symmetries $\xi^p\p_p+\xi^q\p_q+\eta^{\hat v}\p_{\hat v}+\eta^{\hat w}\p_{\hat w}$, where $a$, $c$, $d$, $k_1$, $k_2=\const$. The part of the determining equations explicitly including the parameters of~\eqref{eq:redsub} (the classifying part) in turn is:
\begin{align}\label{eq:classA12}
\begin{split}
    &(aq+d)H'-2aH=0, \quad (aq+d)H'' - aH'=0, \\
    &aHH'+(aq+d)HH''-(aq+d)H'^2=0, \quad b(k_1-k_2)=0,\\
    &(Hz^2_p)_q + \beta z^2 - bHz^1_{pp} = 0,\\
    &(Hz^1_{pp})_q - 2Fz^1_q + \beta z^1_p - 2b(Fz^2_p+Hz^2_{ppp}) = 0.
\end{split}
\end{align}
It is straightforward to recover system~\eqref{eq:redsub} in the two last equations of system~\eqref{eq:classA12}. For the general values of~$f$ and~$b$ splitting of the above system yields the conditions $a=0$, $d=0$ and $k_1=k_2$ and hence gives rise to the Lie invariance algebra $\mathfrak g^{\mathrm{gen}}_{f,b}$ generated by the operators
\[
    \p_p,\quad \mathcal I = \hat v\p_{\hat v} + \hat w\p_{\hat w},\quad \mathcal L (z^1,z^2) = z^1(p,q)\p_{\hat v} + z^2(p,q)\p_{\hat w},
\]
where functions $z^1$ and $z^2$ run through the set of solutions of the system~\eqref{eq:redsub} for the fixed values~$f$ and~$b$. The Lie symmetry operators $\mathcal I$ and $\mathcal L(z^1,z^2)$ arise due to linearity of~\eqref{eq:redsub}.

\looseness=-1
To investigate the problem of induced symmetries of system~\eqref{eq:redsub}, we need to consider the same problem for system~\eqref{eq:redsub0} at first. Up to linear combining, for general values of~$f$ and~$b$ the Lie symmetry operators of~\eqref{eq:redsub0} induced by operators from~$\mathfrak g$ are exhausted by the operators $\p_p$, $2\p_{\hat v}$ and $g(q)\p_{\hat w}$, which are induced by $\XX(1)$, $\FF$ and $\ZZ(g)$, respectively. Here $g$ runs through the set of smooth functions of~$q$. It is obvious that the operator $\p_y+\XX(f)+b\FF$ induces the zero operator. Additionally, if $f=\const$ the operator $\p_t$ induces $\p_q$. Under integration of the first equation of system~\eqref{eq:redsub0}, Lie symmetry transformations generated by $g(q)\p_{\hat w}$ for any fixed $g$ become equivalence transformations of the resulting system. By the above transformation to the unknown functions $\hat v$ and $\hat w$, we gauge the function $h$ arising under integration to zero and hence break the invariance of system~\eqref{eq:redsub} with respect to operators of the form $g(q)\p_{\hat w}$. This is why any operator from $\mathfrak g^{\mathrm{gen}}_{f,b}$ lying in the complement of the linear span of the operators $\p_p$, $2\p_{\hat v}$ and $g(q)\p_{\hat w}$ and, additionally, $\p_q$ in the case of $f=\const$ is a hidden symmetry of the initial system.

The kernel algebra $\mathfrak g^{\mathrm{ker}}$ of class~\eqref{eq:redsub} is generated only by the operators $\p_p$, $\mathcal I$ and $\mathcal L(1,0)$. This is because the set of generators $\mathcal L (z^1,z^2)$ is different for every representative of the class~\eqref{eq:redsub} as the form of systems from the class depends on values of $f$ and $b$. However, it is not feasible to linearly combine solutions of different systems from the class~\eqref{eq:redsub} that contradicts belonging of~$\mathcal L (z^1,z^2)$ to~$\mathfrak g^{\mathrm{ker}}$ for arbitrary values of $z^1$ and $z^2$. Therefore, under solving the group classification problem for the class~\eqref{eq:redsub} it is natural to investigate extensions with respect to $\mathfrak g^{\mathrm{gen}}_{f,b}$ rather than with respect to $\mathfrak g^{\mathrm{ker}}$. In other words, we should find all inequivalent values of the arbitrary elements~$f$ and~$b$ for which $\mathfrak g^{\mathrm{gen}}_{f,b}$ is not the maximal Lie invariance algebra of the corresponding system of the form~\eqref{eq:redsub}. Here the inequivalence is to be understood with respect to the equivalence group of the class~\eqref{eq:redsub}.

For this purpose, it is necessary to solve the classifying part~\eqref{eq:classA12} of the determining equations by taking into account for which forms of $f$ and $b$ an extension of~$\mathfrak g^{\mathrm{gen}}_{f,b}$ is admitted. It is obvious that for $b=0$, the generator $\mathcal I$ from $\mathfrak g^{\mathrm{gen}}_{f,b}$ splits into the two generators $\hat v\p_{\hat v}$ and $\hat w\p_{\hat w}$. This splitting of $\mathcal I$ corresponds to the decoupling of the two equations~\eqref{eq:redsub}.  As the remaining classifying part of the determining equations is independent of $b$, the extensions possible for different values of $H$ are essentially not affected whether or not the two equations~\eqref{eq:redsub} are coupled. In case of $b=0$ we simply consider extensions with respect to $\mathfrak g^{\mathrm{gen}}_{f,0} = \langle\p_p,\hat v\p_{\hat v},\hat w\p_{\hat w},\mathcal L (z^1,z^2) \rangle$ rather than to $\mathfrak g^{\mathrm{gen}}_{f,b}$. It is crucial to remark that from the first three equations of system~\eqref{eq:classA12} only the first equation is independent. The other two equations are its differential consequences. As the investigation of extensions must be done up to equivalence, it would be necessary to compute the equivalence group of the class~\eqref{eq:redsub}. However, it is obvious that this class admits scalings and shifts of $q$ as equivalence transformations. For this reason, we only have to distinguish between the cases $a\ne0$ and $a=0$. In the case $a\ne0$, we can set $a=1$ and $d=0$ by dividing the equation by $a$ and shifting of $q$. As a result, we have $H=\varkappa q^2$, where $\varkappa$ is a positive constant in view of the definition of~$H$, i.e., $f=\pm\sqrt{\varkappa q^2-1}$. The extension of~$\mathfrak g^{\mathrm{gen}}_{f,b}$ is then given by the basis element $p\p_p+q\p_q$ which is a hidden symmetry of the initial system~\eqref{twolayer2}. If $a=0$, we have $H=\const$ and, moreover, $H\geqslant1$, i.e., $f=\const$ and $\mathfrak g^{\mathrm{max}}_{f,b}=\mathfrak g^{\mathrm{gen}}_{f,b}+\langle\p_q\rangle$. Recall that operator $\p_q$ in this case is induced by the operator~$\p_t$. Formally, we can scale the constant value~$H$ by a uniform scaling in~$p$ and~$q$ but this is not related to the nature of the problem. 


\medskip

\noindent\textbf{The uncoupled system ($\boldsymbol{b=0}$).} We now proceed by studying the case $b=0$, which leads to a decoupling of system~\eqref{eq:redsub}:
\begin{align}\label{eq:redsubb0}
   (H\hat w_p)_q + \beta \hat w = 0, \quad (H\hat v_{pp})_q -2F\hat v_q+\beta \hat v_p = 0.
\end{align}
The change of variables
\[
    \bar p = p, \quad \bar q = \int \frac{\ddd q}{H(q)}, \quad \bar v = H(q)\hat v, \quad \bar w = H(q)\hat w,
\]
allows to transform this system to
\begin{align}\label{eq:redsublinear}
   \bar w_{\bar p\bar q} + \beta \bar w = 0, \quad \bar v_{\bar p \bar p\bar q} -2F(H^{-1}\bar v)_{\bar q}+\beta \bar v_{\bar p} = 0.
\end{align}
Analogously to~\cite{bihl09Ay}, in the first equation we recover the Klein--Gordon equation in light-cone variables. For this system, we only have one possibility for Lie reduction in the general case~of~$H$. Namely, we can reduce~\eqref{eq:redsubb0} by using the subalgebra $\langle\p_{\bar p}+\lambda_1\bar v\p_{\bar v} + \lambda_2\bar w\p_{\bar w}\rangle$, where we take into account that for the decoupled case $b=0$ the generator $\hat v\p_{\hat v} + \hat w\p_{\hat w}$ of~\eqref{eq:redsub} splits into the two single operators $\hat v\p_{\hat v}$ and $\hat w\p_{\hat w}$ (note that $\hat v\p_{\hat v} = \bar v\p_{\bar v}$ and $\hat w\p_{\hat w}= \bar w\p_{\bar w}$). The ansatz for reduction reads $\bar v = \tilde v(\bar q)e^{\lambda_1 \bar p}$ and $\bar w = \tilde w(\bar q)e^{\lambda_2 \bar p}$. It leads to the following system of ordinary differential equations
\[
    \lambda_2 \tilde w_{\bar q} + \beta \tilde w = 0, \quad \lambda_1^2 \tilde v_{\bar q} - 2F(H^{-1}\tilde v)_{\bar q} + \beta\lambda_1\tilde v = 0.
\]
The integration of the resulting system, the substitution of the obtained solution to the ansatz and the inverse change of variables yields
\[
 \hat w = \frac{c_1}{H}\exp\left(\lambda_2 p-\frac{\beta}{\lambda_2}\int \frac{\ddd q}{H}\right),\quad \hat v = \frac{c_2}{H}\exp\left(\lambda_1 p + \int\frac{2H_{q} F+\beta\lambda_1H}{2F-\lambda_1^2H}\frac{\ddd q}{H}\right).
\]
Recall that $H=1+f^2$. For the values $f=-l/k$, where the constants $k$ and $l$ are wave numbers and $\lambda_1=\lambda_2=ik$, $i^2=-1$, this solution reduces to the well-known Rossby waves in the two-layer model. The solution for the barotropic mode $\hat w$ describes a single barotropic Rossby wave, which is independent of the vertical structure of the two-layer setting. In turn, the solution for $\hat v$ describes the evolution of the first baroclinic mode and explicitly depends on the vertical layer structure via the parameter dependency on $F$, which accounts for the density difference between the layers. Note that for a more general choice of the function $f$, this solution allows to derive series of wave solutions in a similar way as it was possible for the barotropic vorticity equation~\cite{bihl09Ay}. However, for $f\ne\const$ there arises an additional term proportional to $y^2$ in the solution of $\psi^+$, which may violate the boundary conditions. Although such a violation of the boundaries does not exist for the solution of $\psi^-$, the global (i.e.~large-scale) realization of generalized Rossby waves is at once limited due to this restriction.

The case $b=0$ shows that the classical Rossby wave solution can be recovered in two steps: Firstly reducing with respect to the operator $\p_y + \XX(f) + b\FF$ and secondly performing reduction to a system of ordinary differential equations using a hidden symmetry of the submodel received in the first step.

Reduction with respect to the additional symmetries for special values $f(q)=\const$ and $f(q)=\pm\sqrt{\varkappa q^2-1}$ will not be considered here, because for a decoupled system it is be better to construct exact solutions of each equation separately and then compose them to a solution of the entire system. In our particular case, the single Klein--Gordon equation has a wider maximal Lie invariance algebra than system~\eqref{eq:redsublinear}, given by $\langle\p_{\bar p}, \ \p_{\bar q}, \ \bar p\p_{\bar p} - \bar q\p_{\bar q}, \ \bar w\p_{\bar w}, \ z(\bar p, \bar q)\p_{\bar w}\rangle$, where $g$ runs through the set of solutions of the Klein--Gordon equation. That is, we have more possibilities for finding exact solutions by Lie methods. Fortunately, it is not necessary to do this in view of large classes of exact solutions already known for the Klein--Gordon equation~\cite{poly02Ay}.

For the split system~\eqref{eq:redsublinear}, it remains to determine the Lie symmetries and perform Lie reductions of the second equation,
\begin{equation}\label{eq:sub}
    \bar v_{\bar p \bar p\bar q} -2F(A\bar v)_{\bar q}+\beta\bar v_{\bar p} = 0,
\end{equation}
where $A=H^{-1}>0$. The determining equations for the coefficients of the Lie symmetry operator $Q = \xi^{\bar p}\p_{\bar p} + \xi^{\bar p}\p_{\bar p}+ \eta\p_{\bar v}$ of Eq.~\eqref{eq:sub} not involving $A$ can be integrated to give
\[
 \xi^{\bar p} = -a\bar{p}+c, \quad \xi^{\bar q} = a{\bar q}+d, \quad \eta =k\bar v+z(\bar p,\bar q).
\]
The remaining classifying part of the determining equations reads
\begin{equation}\label{eq:classifyingpart}
 (a\bar q + d)A_{\bar q} - 2 a A = 0, \quad aA A_{\bar q} - (a\bar q+d)A_{\bar q}^2 + (a\bar q+d)AA_{\bar q\bar q} = 0.
\end{equation}
Again, the second equation is a differential consequence of the first equation. For general $A$, splitting of~\eqref{eq:classifyingpart} leads to the two essential Lie symmetry generators $\p_{\bar p}$ and $\bar v\p_{\bar v}$ together with the linearity operators $\mathcal L(z)=z(\bar p,\bar q)\p_{\bar v}$, where $z$ runs through the set of solutions of~\eqref{eq:sub}.

Upon looking for Lie symmetry extensions, we should consider the cases where $(a,d)\ne(0,0)$. We then distinguish two cases. (i) $a\ne0$. We scale $a=1$ and shift $q$ to set $d=0$ and obtain the value $A=\varkappa q^2$ with the additional Lie symmetry generator $\bar q\p_{\bar q} - \bar p \p_{\bar p}$, where $\varkappa$ is a nonvanishing constant. (ii) $a=0$, $d\ne0$. In this case we find $A=\const$ and the additional Lie symmetry generator $\p_{\bar q}$. The first operator again is a hidden symmetry, while the second generator is induced by $\p_t$. Note that the case of $A=0$ would yield wider symmetry extensions but cannot be realized in the present case since by definition $A\ne0$. Both the cases of extensions correspond to those above for arbitrary~$b$ by taking into account the change of variables leading to system~\eqref{eq:redsublinear}.

We now consider the Lie reductions of Eq.~\eqref{eq:sub}. For general $A$, the only nontrivial possibility for reduction is given by $\p_{\bar p} + \lambda \bar v\p_{\bar v}$, which was already considered above. The solution is
\[
 \bar v = c\exp\left(\lambda \bar p -\int\frac{\beta\lambda - 2FA_{\bar q}}{\lambda^2 - 2FA}\ddd \bar q\right),
\]
which can be combined with arbitrary solutions of the Klein--Gordon equation to yield a solution of the decoupled system~\eqref{eq:redsublinear}. It remains to study Lie reductions involving symmetry extensions.

In the first case we have $A=\varkappa q^2$. The maximal Lie invariance algebra of the equation~\eqref{eq:sub} with $A=\varkappa q^2$ is given by $\langle\bar q\p_{\bar q}-\bar p\p_{\bar p}, \p_{\bar p},\bar v\p_{\bar v}, \mathcal L(z)\rangle$. One more inequivalent one-dimensional subalgebra of this algebra is appropriate for Lie reduction. It reads $\langle\bar q\p_{\bar q} - \bar p \p_{\bar p} + \lambda \bar v\p_{\bar v}\rangle$. An ansatz for the reduction is $\bar v= \tilde v(r)q^\lambda$, where $r=\bar p\bar q$. The corresponding reduced equation is 
\[
 r \tilde v_{rrr} + (\lambda+2)\tilde v_{rr} + (\beta-2\varkappa Fr)\tilde v_r - 2\varkappa F(\lambda+2) \tilde v = 0.
\]
For $\lambda=-2$, this equation is solved in terms of Whittaker functions $M_{m,n}(r)$, $W_{m,n}(r)$:
\[\textstyle 
\tilde v = C_1\int M_{m,n}(\sqrt{8\varkappa F}r)\,\ddd r  + C_2\int W_{m,n}(\sqrt{8\varkappa F}r)\,\ddd r + C_3,
\]
where $m = \beta(8\varkappa F)^{-1/2}$ and $n = 1/2$. Moreover, for $\lambda=-k$, $k\in\mathbb{N}$, this equation admits polynomial solutions.

In the second case of extension, we have $A=\const$. Then, the maximal Lie invariance algebra coincides with $\langle\p_{\bar q}, \p_{\bar p},\bar v\p_{\bar v}, \mathcal L(z)\rangle$. Again, one more inequivalent one-dimensional subalgebra can be used to carry out Lie reduction, which is $\langle\p_{\bar q} + \kappa \p_{\bar p}+ \lambda\bar v\p_{\bar v}\rangle$. An appropriate ansatz for reduction is $\bar v = \tilde v(r)e^{\lambda\bar q}$, where $r=\bar p - \kappa \bar q$. Plugging this ansatz into~\eqref{eq:sub}, we find
\[
 \kappa\tilde v_{rrr} - \lambda\tilde v_{rr} - (2AF\kappa +\beta)\tilde v_{r} +2AF\lambda \tilde v = 0.
\]
This is a linear third-order ordinary differential equation with constant coefficients and thus can be solved by standard methods. In particular, as $\kappa$, $\lambda$ and $A$ are arbitrary constants, we can determine them upon prescribing a solution of the associated characteristic equation. This allows one to generate wide classes of solutions with rather different type, such as e.g.\ periodic wave solutions.

Since~\eqref{eq:sub} is a linear partial differential equation, using the same symmetries and the method of extended Lie reduction described in Appendix~\ref{sec:extendedsetofsolutionsQG} we can obtain much wider families of exact solutions for this equation, cf.\ the example in this appendix.

\medskip

\noindent\textbf{The coupled system $\boldsymbol{(b\ne0)}$.} For the coupled case $b\ne0$ the Lie reduction of system~\eqref{eq:redsub} is quite similar to the case $b=0$. For the sake of completeness, we list here all the reduced models that are possible for different values of $H$. 

For general $H$, the only possibility for reduction is due to the generator $Q=\p_p+\lambda\mathcal I$. The corresponding reduction ansatz is $\hat v = \bar v(q)e^{\lambda p}$, $\hat w = \bar w(q)e^{\lambda p}$. Plugging the ansatz into system~\eqref{eq:redsub}, one obtains
\begin{gather*}
\lambda H \bar w_q + (\lambda H_q+\beta) \bar w - b\lambda^2 H\bar v  = 0, \\
(\lambda^2H - 2F)\bar v_q + (\lambda^2H_q+\beta\lambda)\bar v - 2b\lambda(\lambda^2+F)\bar w  = 0.
\end{gather*}

For $H=\varkappa q^2$, the additional inequivalent reduction using $Q=p\p_p + q\p_q + \lambda \mathcal I$ is possible. Utilizing the ansatz $\hat v = \bar v(r)q^\lambda$, $\hat w = \bar w(r)q^\lambda$, where $r=pq^{-1}$, leads to the system
\begin{align*}
 &\varkappa(\lambda+1)\bar w_r - \varkappa r\bar w_{rr} + \beta\bar w - b\varkappa\bar v_{rr}=0, \\
 &\varkappa r\bar v_{rrr} - \lambda\varkappa\bar v_{rr} + 2F(\lambda\bar v - r\bar v_r) - \beta\bar v_r + 2b(\varkappa\bar w_{rrr}+F\bar w_r) = 0.
\end{align*}

For $H=\const$ we can also reduce~\eqref{eq:redsub} using $Q=\p_q + \kappa\p_p + \lambda \mathcal I$. The ansatz for reduction is $\hat v = \bar v(r)e^{\lambda q}$, $\hat w = \bar w(r)e^{\lambda q}$, where $r=p-\kappa q$. The resulting submodel is
\begin{align*}
 &H(\kappa \bar w_{rr} - \lambda \bar w_r) - \beta \bar w + b H\bar v_{rr} = 0, \\
 &H(\kappa\bar v_{rrr} - \lambda \bar v_{rr})  -(2F\kappa +\beta)\bar v_r +2F\lambda \bar v + 2b(H\bar w_{rrr} + F\bar w_r) = 0.
\end{align*}

All of the above reduced systems are linear systems of ordinary differential equations and, moreover, the last system has constant coefficients. As in the decoupled case $b=0$, for finding more exact solutions we could also apply the method of extended Lie reduction to~\eqref{eq:redsub} with $b\ne0$.

\subsection{Subalgebra \texorpdfstring{$\mathcal A^1_3$}{A13}}

\noindent \textbf{Reduction using $\mathcal A^1_3$.} For this subalgebra, it is convenient to start with barotropic/baroclinic variables from the beginning. Since in the case $f=0$ no Lie reduction is possible, we assume that $f\ne0$. An appropriate ansatz for reduction then reads
\[
    \psi^+=v^+-2\frac{f'y-g}{f}x,\quad \psi^-=v^-+2\frac{b}{f}x,
\]
where $p=y$ and $q=t$. Plugging this ansatz into~\eqref{twolayer2} gives reduction to the  system:
\begin{subequations}\label{eq:redA130}
\begin{gather}
v^+_{ppq}-\frac{f'p-g}{f}(v^+_{ppp}+2\beta)+\frac{b}{f}v^-_{ppp} = 0,\nonumber  \\
v^-_{ppq}-2Fv^-_q -\frac{f'p-g}{f}(v^-_{ppp}-2Fv^-_p)+\frac{b}{f}(v^+_{ppp}+2Fv^+_p+2\beta)=0,
\end{gather}
where it can be seen that the coupling between the barotropic and baroclinic parts is again provided only due to the existence of generator $\FF$. To solve this system, we integrate the first equation twice with respect to $p$ to yield
\begin{equation}\label{eq:redA130b}
    v^+_q -\frac{f'p-g}{f}v^+_p +2\frac{f'}{f}v^++\frac{b}{f}v^-_p - \frac13\frac{f'}{f}\beta p^3 + \frac{\beta g}{f}p^2+h^1(q)p+h^0(q)=0,
\end{equation}
\end{subequations}
where $h^1$ and $h^0$ are arbitrary smooth functions of $q$. By means of the change of unknown functions
\[
    v^+ = \hat v^+ + \gamma^2p^3+\gamma^1(q)p + \gamma^0(q), \quad v^-=\hat v^- + \delta^2(q)p^2+\delta^1(q)p + \delta^0(q),
\]
where
\begin{align*}
    &\gamma^2 = -\frac{\beta}{3}, && \delta^2 = -b\beta f^2\int\frac{1}{f^3}\ddd q, \\
    &\gamma^1=-\frac{1}{f}\int(2b\delta^2+fh^1)\ddd q, && \delta^1 = -2f\int\frac{g\delta^2}{f^2}\ddd q, \\
    &\gamma^0=-\frac{1}{f^2}\int f(g\gamma^1 + b\delta^1+fh^0)\ddd q, && \delta^0=\frac{1}{F}\int\frac{bF\gamma^1-gF\delta^1+f\delta^2_q}{f}\ddd q,
\end{align*}
we are able to reduce system~\eqref{eq:redA130} to the corresponding homogeneous form:
\begin{align}\label{eq:redA13}
\begin{split}
    &\hat v^+_q -\frac{f'p-g}{f}\hat v^+_p +2\frac{f'}{f}\hat v^++\frac{b}{f}\hat v^-_p =0, \\
    &\hat v^-_{ppq}-2F\hat v^-_q -\frac{f'p-g}{f}(\hat v^-_{ppp}-2F\hat v^-_p)+\frac{b}{f}(\hat v^+_{ppp}+2F\hat v^+_p)=0.
\end{split}
\end{align}
System~\eqref{eq:redA13} is simplified further using the transformation
\[
    \tilde p = f(q)p - \int g(q)\, \ddd q,\quad \tilde q = q,\quad \tilde v^+ = \hat v^+,\quad \tilde v^- = \hat v^-.
\]
In the new variables, system~\eqref{eq:redA13} becomes
\begin{align}\label{eq:redA132}
\begin{split}
    &(f^2\tilde v^+)_{\tilde q} + b(f^2\tilde v^-)_{\tilde p}=0, \\
    &(f^2\tilde v^-_{\tilde p\tilde p}-2F\tilde v^-)_{\tilde q} + b(f^2\tilde v^+_{\tilde p\tilde p} + 2F\tilde v^+)_{\tilde p}=0.
\end{split}
\end{align}
The first equation of~\eqref{eq:redA132} can be used to introduce a potential variable, via $V_{\tilde p}=f^2\tilde v^+$ and $V_{\tilde q} = -bf^2\tilde v^-$. Upon introducing $\bar q = \int f^2\ddd \tilde q$, the second equation then becomes
\begin{equation}\label{eq:redA13pot}
 f^2(f^2V_{\tilde p\tilde p\bar q})_{\bar q} - 2Ff^2V_{\bar q\bar q} - b^2\left(V_{\tilde p\tilde p\tilde p\tilde p} + \frac{2F}{f^2}V_{\tilde p\tilde p}\right)=0.
\end{equation}

\medskip

\noindent\textbf{The decoupled system ($\boldsymbol{b=0}$).} The general solution of the decoupled system~\eqref{eq:redA132} is
\[
 \hat v^+ = \frac{\zeta^1(\tilde p)}{f^2},\quad \hat v^- = \zeta^2(\tilde p) + \vartheta^1(q)e^{\sqrt{2F}p} + \vartheta^2(q)e^{-\sqrt{2F}p},
\]
where $\zeta^1$ and $\zeta^2$ are arbitrary functions of $\tilde p = f(q)p - \int g(q)\, \ddd q$ and $\vartheta^1$ and $\vartheta^2$ are arbitrary functions of~$q$.

\medskip

\noindent\textbf{The coupled system ($\boldsymbol{b\ne0}$).} As we found the general solution for the decoupled case of systems of the form~\eqref{eq:redA132}, we should not study further Lie reductions for this case. Hence we carry out Lie reductions solely for the case of $b\ne0$. For this reason, it is necessary to solve the group classification problem for the class~\eqref{eq:redA132} with $f$, $g$ and~$b$ as arbitrary elements under the assumption that $b\ne0$. As the class~\eqref{eq:redA132} consists of linear systems, the procedure of its group classification should be done as described in the previous section. The general form of Lie symmetry operators of a system from the class~\eqref{eq:redA132} is $\xi^{\tilde p}\p_{\tilde p}+\xi^{\tilde q}\p_{\tilde q}+\eta^+\p_{\tilde v^+}+\eta^-\p_{\tilde v^-}$, where the coefficients are functions of $\tilde p$, $\tilde q$, $\tilde v^+$ and $\tilde v^-$. The solution of the corresponding determining equations gives
\[
    \xi^{\tilde p} = a\tilde p + c,\quad \xi^{\tilde q} = a\tilde q+d, \quad \eta^+ = k\tilde v^+ + z^1(\tilde q,\tilde p), \quad \eta^- = k\tilde v^- + z^2(\tilde q,\tilde p).
\]
Additionally there is the single classifying equation $(aq+d)f_q-af=0$ (up to its differential consequences). For the general value of the arbitrary element~$f$ we obtain $a=d=0$. The maximal Lie invariance algebra in this case reads
\[
  \mathfrak g^{\mathrm{gen}}_f=\langle\p_{\tilde p},\ \mathcal I = \tilde v^+\p_{\tilde v^+} + \tilde v^-\p_{\tilde v^-},\ \mathcal L(\tilde z^1,\tilde z^2) = z^1(\tilde p,\tilde q)\p_{\tilde v^+}+z^2(\tilde p,\tilde q)\p_{\tilde v^-}\rangle,
\]
where $(z^1,z^2)$ run through the set of solutions of system~\eqref{eq:redA132}. Inequivalent extensions of the algebra~$\mathfrak g^{\mathrm{gen}}_f$ only exist for $f=\const$ and $f=\varkappa q$. The first case leads to the extension of~$\mathfrak g^{\mathrm{gen}}_f$ by $\p_{\tilde q}$, the second case gives the extension of~$\mathfrak g^{\mathrm{gen}}_f$ by $\tilde p\p_{\tilde p} + \tilde q\p_{\tilde q}$.

To study the problem of induced Lie symmetries of system~\eqref{eq:redA132} with  $b\ne0$, we need to consider the same problem for the initial reduced system at first. Up to linear combining, for general values of~$f$ and~$g$ the Lie symmetry operators of~\eqref{eq:redsub0} induced by operators from~$\mathfrak g$ are exhausted by the operators 
\[
\tilde\XX(\tilde f)=-2\left(\tilde f_qp-\frac{f_q}f\tilde fp+\frac gf\tilde f\right)\p_{v^+}-2\frac bf\tilde f\p_{v^-},\quad  
\tilde\FF=2\p_{v^-}, \quad \tilde\ZZ(\tilde g)=\tilde g(q)\p_{v^+}, 
\]
which are induced by $\XX(\tilde f)$, $\FF$ and $\ZZ(\tilde g)$, respectively. Here $\tilde f$ and~$\tilde g$ run through the set of smooth functions of~$q=t$. Additionally, if $f=\const$ the operator $\p_y$ induces $\p_p$ and if also $g=\const$ the operator $\p_t$ induces $\p_q$. 
Lie symmetry transformations generated by $\tilde\XX(\tilde f)+\tilde\ZZ(\tilde g)$ with any fixed $\tilde f$ and~$\tilde g$ are equivalence transformations of the system~\eqref{eq:redA130}. By these transformations, we gauge the functions $h^0$ and~$h^1$ in~\eqref{eq:redA130b} to zero and hence break the invariance with respect to the operators $\tilde\XX(\tilde f)$ and $\tilde\ZZ(\tilde g)$ with the general values of~$\tilde f$ and~$\tilde g$. Summing up, we can say that almost all operators from $\mathfrak g^{\mathrm{gen}}_f$ and its extensions are hidden symmetries of the initial system~\eqref{twolayer}.

It was shown before that by introducing a potential variable, a system of the form~\eqref{eq:redA132} can be converted into a single equation in the conserved form~\eqref{eq:redA13pot}. Since such an equation may have additional symmetries compared to the original system, we also carry out the group classification of the class~\eqref{eq:redA13pot} (again under the assumption that $b\ne0$). Solving the determining equations for the coefficients of a Lie symmetry operator $\xi^{\tilde p}\p_{\tilde p}+\xi^{\bar q}\p_{\bar q}+\eta^V\p_V$ of~\eqref{eq:redA13pot} gives
\[
    \xi^{\tilde p} = a\tilde p + c,\quad \xi^{\bar q} = 3a\bar q+d, \quad \eta^V = \alpha V+z(\tilde p,\bar q),
\]
together with the classifying equation $(3a\bar q+d)f'-af = 0$. For arbitrary $f$, this immediately implies that $a=d=0$ and hence gives rise to the maximal Lie invariance algebra $\langle\p_{\tilde p},\, V\p_V,\, z(\tilde p,\bar q)\p_V\rangle$, where $g$ is an arbitrary solution of~\eqref{eq:redA13pot}. There are two inequivalent extensions of this algebra, depending on either $a\ne0$, $d=0$ or $a=0$, $d\ne 0$. Since scalings and shifts in $q$ are equivalence transformations, we may first set $a=1/3$, $d=0$ with $f=\varkappa\sqrt[3]{\bar q}$. The additional generator then reads $\tilde p\p_{\tilde p}+3\bar q\p_{\bar q}$. The second case of extension is given by $a=0$ and $d=1$, leading to $f=\const$ with the corresponding additional generator $\p_{\tilde q}$.

Comparing this result with the classification of system~\eqref{eq:redA132}, we see that all Lie symmetries of the potential equation~\eqref{eq:redA13pot} are induced by Lie symmetries of~\eqref{eq:redA132} (note the change of the variable $\tilde q$ in the potential case). That is, for the reduced system~\eqref{eq:redA132} no purely potential symmetries associated with the potential equation~\eqref{eq:redA13pot} exist.

Now that we have investigated all symmetry extensions for particular values of $f(q)$, it remains to present the corresponding Lie reductions of~\eqref{eq:redA132}. The only feasible way for reduction for general $f$ is due to the operator $\p_{\tilde p} + \lambda\mathcal I$. The ansatz for reduction then is $\tilde v^+=\bar v^+(\tilde q)e^{\lambda \tilde p}$ and $\tilde v^-=\bar v^-(\tilde q)e^{\lambda \tilde p}$. The system of reduced equation reads
\begin{gather*}
    u_{\tilde q} + bf^2\lambda \bar v^- = 0, \quad (f^2\lambda^2\bar v^- - 2F\bar v^-)_{\tilde q} + b\lambda\left(\lambda^2 + \frac{2F}{f^2}\right)u=0,
\end{gather*}
where $u=f^2\bar v^+$. The case of $\lambda =0$ gives only a trivial solution and will not be considered here. For $\lambda \ne0$ it is possible to solve the first equation for $\bar v^-$. Plugging the corresponding expression into the second equation, the following homogeneous second order ordinary differential equation with variable coefficients for~$u$ is obtained
\[
u_{\tilde q\tilde q}+\frac{4F}{\lambda^2f^2-2F}\frac{f_{\tilde q}}fu_{\tilde q}-b^2\lambda^2\frac{\lambda^2f^2+2F}{\lambda^2f^2-2F}u = 0.
\]

Two more Lie reductions are possible for the particular values $f=\varkappa=\const$ and $f=\varkappa q$. In the first case, the maximal Lie invariance algebra reads $\langle\p_{\tilde q}, \p_{\tilde p}, \mathcal I, \mathcal L(\tilde z^1,\tilde z^2) \rangle$. We aim to reduce~\eqref{eq:redA132} using the generator $Q=\p_{\tilde q}+\kappa\p_{\tilde p} + \lambda\mathcal I$. The ansatz for reduction is $\tilde v^+ = \bar v^+(r)e^{\lambda \tilde q}$ and $\tilde v^- = \bar v^-(r)e^{\lambda \tilde q}$, where $r=\tilde p-\kappa\tilde q$. Plugging this ansatz into~\eqref{eq:redA132}, we obtain
\begin{gather*}
\kappa  \bar v^+_r - \lambda \bar v^+ - b\bar v^-_r = 0, \\
\varkappa^2(\kappa \bar v^-_{rrr} - \lambda\bar v^-_{rr}) - 2F(\kappa \bar v^-_r - \lambda \bar v^-) - b(\varkappa^2\bar v^+_{rrr} + 2Fv^+_{r}) = 0.
\end{gather*}
This is a coupled system of first and third order ordinary differential equations with constant coefficients and thus can be solved explicitly using standard techniques.

In the second case, the maximal Lie invariance algebra is given by 
$\langle\tilde q\p_{\tilde q}+\tilde p \p_{\tilde p}, \p_{\tilde p}, \mathcal I, \mathcal L(\tilde z^1,\tilde z^2) \rangle.$ 
One-dimensional reduction is feasible using the generator $Q=\tilde q\p_{\tilde q}+\tilde p \p_{\tilde p}+\lambda\mathcal I$. The corresponding ansatz for reduction is $\tilde v^+=\bar v^+(r)q^\lambda$ and $\tilde v^- = \bar v^-(r)q^\lambda$, where $r=pq^{-1}$. This ansatz leads to the following reduction of system~\eqref{eq:redA132}:
\begin{gather*}
  r\bar v^+_r-(\lambda+2)\bar v^+ - b\bar v^-_r = 0, \\
  r\bar v^-_{rrr} - (2\varkappa^2+\varkappa(\lambda -2))\bar v^-_{rr} - 2F(r\bar v^-_r - \lambda v^-) - b(\varkappa^2\bar v^+_{rrr} + 2F\bar v^+_r) = 0.
\end{gather*}

\begin{remark}
Substituting back the solutions of the reduced equations into the ansatz for the original unknown functions $\psi^+$ and $\psi^-$, we find that all solutions have time-dependent polynomial parts in $x$ and $y$. Since this part is not compatible with typical boundaries of the two-layer equations (as discussed in Section~\ref{sec:boundaryQG}), only solutions for the restricted case of $b=g=0$, $f=\const$ might give candidate solutions that could be realized in the framework of geophysical fluid dynamics. However, since the solution of the decoupled case $b=0$ for $g=0$, $f=\const$ is rather trivial, it is not too interesting from the physical point of view.
\end{remark}

\section{Invariant reduction with two-dimensional subalgebras}\label{sec:reduction2QG}

Now we carry out Lie reductions of~\eqref{twolayer} or~\eqref{twolayer2} to systems of ordinary differential equations using the optimal set of two-dimensional inequivalent subalgebras. Note that not all subalgebras give rise to classical group-invariant reductions. Namely, all the subalgebras from the list~\eqref{eq:algebra2} with negative subscripts cannot be used for the construction of well-defined ansatzes for the dependent variables. However, these algebras would be well suited for the construction of partially-invariant solutions, but we do not pursue this idea further in the present paper.

\subsection{Subalgebra \texorpdfstring{$\mathcal A^2_1$}{A21}}

An ansatz for group-invariant reduction using this subalgebra is $\psi^1=v^1(p)+\kappa t+(\mu+\rho)y$ and $\psi^2=v^2(p)-\kappa t+(\mu-\rho)y$, where $p=x-\nu y$. System~\eqref{twolayer} is reduced by this ansatz to
\begin{align}\label{eq:redA21}
\begin{split}
    -&(\rho+\mu)(1+\nu^2)v^1_{ppp}+ F\mu(v^1_p-v^2_p) -  F\rho(v^1_p+v^2_p) -2 F\kappa+\beta v^1_p = 0, \\
     &(\rho- \mu)(1+\nu^2)v^2_{ppp}-  F\mu(v^1_p-v^2_p) +  F\rho(v^1_p+v^2_p) +2 F\kappa + \beta v^2_p = 0.
\end{split}
\end{align}
Integrating once with respect to $p$, the above system becomes an inhomogeneous system of two second order linear ordinary differential equations with constants coefficients, provided we screen out the singular cases of $\rho=\mu=0$, $\rho=\mu\ge0$, $\rho=\mu <0$, $\rho=-\mu\ge0$ and $\rho=-\mu<0$. The solution of this system in the nonsingular case is straightforward but a bit lengthy and is therefore omitted here. Hence, we will focus on the listed singular cases only.

\medskip

\noindent\textbf{Case} $\rho=\mu=0$. The general solution for this case is
\[
    v^1 = \frac{2 F\kappa}{\beta}p + c_1, \qquad v^2 = -\frac{2 F\kappa}{\beta}p + c_2,
\]
which in the original variables $(\psi^1,\psi^2)$ gives a solution linear in $(x,y)$ and thus represents a constant wind field in both layers.

\medskip

\noindent\textbf{Case} $\rho=\mu\ge0$. This case leads to the semi-coupled system
\[
    2\mu(1+\nu^2)v^1_{ppp} + 2 F\mu v^2_p+2 F\kappa - \beta v^1_p = 0,\quad 2 F\mu v^2_p + 2 F\kappa + \beta v^2_p = 0,
\]
which is integrated to yield
\begin{align*}
    &v^1 = c_1\exp\left(\sqrt{\frac{\beta}{2\mu(1+\nu^2)}}p\right)+c_2\exp\left(-\sqrt{\frac{\beta}{2\mu(1+\nu^2)}}p\right)+ \frac{2F\kappa p}{2F\mu +\beta} + c_3, \\
    &v^2 = -\frac{2F\kappa p}{2F\mu +\beta} + c_4.
\end{align*}
In the original variables, this represents a simple exponential solution and is thus unphysical.

\medskip

\noindent\textbf{Case} $\rho=\mu<0$. The general solution in this case is
\begin{align*}
        &v^1 = c_1\cos\left(\sqrt{\frac{\beta}{2|\mu|(1+\nu^2)}}p\right)+c_2\sin\left(\sqrt{\frac{\beta}{2|\mu|(1+\nu^2)}}p\right)-\frac{2F\kappa p}{2F|\mu|-\beta}+c_3, \\
        & v^2 = \frac{2F\kappa  p}{2F|\mu|-\beta}+c_4,
\end{align*}
which in the original variables represents a single (stationary) Rossby-wave in the upper layer and a constant velocity field in the lower layer, see Figure~\ref{fig:RossbyWave}. This is a typical situation, in the study of baroclinic instability: An initial disturbance in the middle of the troposphere may start to grow while the lower part of the troposphere does not exhibit any peculiarities. It is not before the upper Rossby-wave starts unstable growth, that the lower layer also begins to show some wave-like disturbances, which subsequently may lead to the onset of cyclogenesis. Hence, the above exact solution may characterize the situation at the onset of baroclinic instability, where due to external forcing a Rossby wave in the upper layer is generated, while the wind field in the lower layer is still unaffected.

\begin{figure}[!htp]
\centering
  \includegraphics[scale=0.7]{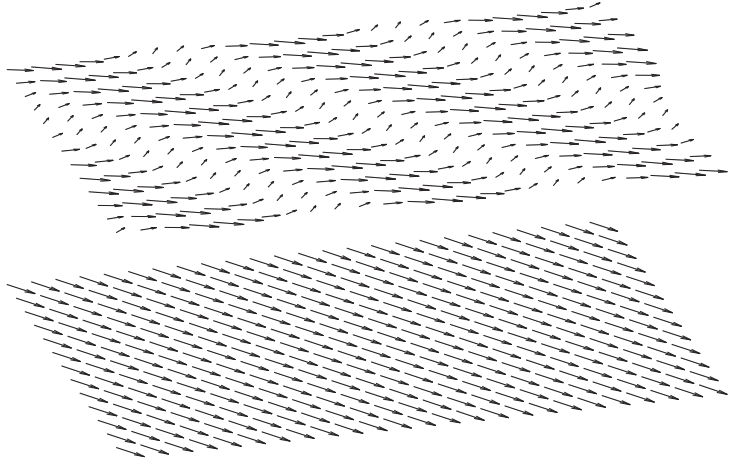}
  \caption{Rossby wave solution in the quasi-geostrophic two-layer model.}\label{fig:RossbyWave}
\end{figure}

\medskip

\noindent The cases $\rho=-\mu\ge0$ and $\rho=-\mu<0$ give the same solutions as in the two previous cases, except for interchanging the two layers, i.e.\ $v^1\leftrightarrow v^2$ and permuting the sign of the linear in $p$ term.

\subsection{Subalgebra \texorpdfstring{$\mathcal A^2_2$}{A22}}

It is convenient to use at once the quasi-geostrophic equations in terms of barotropic/baroclinic variables to perform the reduction. The ansatz we choose is: $\psi^+=v^1(p)-2\sigma px$ and $\psi^-=v^2(p)+2\kappa t$, where $p=y-\nu t$. Using these variables, the resulting submodel of~\eqref{twolayer} is:
\[
    (\nu +\sigma p)v^1_{ppp} + 2b\beta p = 0, \quad (\nu +\sigma p)v^2_{ppp} - 2F(\nu +\sigma p)v^2_p + 4F\kappa = 0,
\]
which is a decoupled system of third order linear ODEs. The case of $b=0$ is trivial and will not be considered here. For $b\ne0$ the general solution is
\begin{align*}
v^1 = {}&\frac{\nu}{\sigma}\beta\ln(\nu +\sigma p)\left(p^2+2\frac{\nu}{\sigma}p+\frac{\nu^2}{\sigma^2}\right)
-\frac13\beta\left(p^3+\frac{9\nu}{2\sigma}p^2+\frac{6\nu^2}{\sigma^2}p+\frac{3\nu^3}{2\sigma^3}\right) \\
&+ c_1p^2+c_2p+c_3, \\
v^2 = {}&\frac{\kappa}{\sigma}\Big(2\ln(\nu +\sigma p)
+\mathrm{Ei}\left(\tfrac{\sqrt{2F}}{\sigma}(\nu +\sigma p)\right)e^{\frac{\sqrt{2F}}{\sigma}(\nu +\sigma p)}
+\mathrm{Ei}\left(-\tfrac{\sqrt{2F}}{\sigma}(\nu +\sigma p)\right)e^{-\frac{\sqrt{2F}}{\sigma}(\nu +\sigma p)}\Big)\\
&+ c_4e^{\sqrt{2F}p}+c_5e^{-\sqrt{2F}p}+c_6,
\end{align*}
where $\mathrm{Ei}(z)=\int_z^\infty t^{-1}e^{-t}\ddd t$ denotes the exponential integral. In terms of the $\psi^+$/$\psi^-$ variables this solution is the superposition of some polynomial with an exponential function in $y$-direction. From the meteorological point of view, this solution does not seem to be relevant. 

\subsection{Subalgebra \texorpdfstring{$\mathcal A^2_3$}{A23}}

Using the barotropic/baroclinic variables, the ansatz for reduction under this subalgebra is $\psi^+=v^1(p)+2\mu x$, $\psi^-=v^2(p)+2\kappa t + 2\rho x$, where $p=y-\nu t$. The reduced system then is
\begin{gather*}\samepage
    (\nu-\mu)v^1_{ppp} - \rho v^2_{ppp} - 2\beta\mu = 0, \\
    (\nu-\mu)v^2_{ppp} - \rho v^1_{ppp} -2F(\nu-\mu)v^2_p - 2F\rho v^1_p +4F\kappa -2\beta\rho = 0.
\end{gather*}
This system is now completely integrable. The case $\rho=0$ and $\mu\ne\nu$ leads to a decoupled system of equations which can be integrated easily. The same also holds in the case $\rho\ne0$ and $\nu=\mu$. Hence, we focus on the case where $\rho\ne0$ and~$\nu$ and~$\mu$ are arbitrary. The first equation can be integrated at once three times, yielding the relation between $v^1$ and $v^2$, given by
\[
    v^2 = \frac{1}{\rho}\left((\nu-\mu)v^1 -\frac13\beta\mu p^3+c_1p^2+c_2p+c_3\right),
\]
where $c_1$, $c_2$ and $c_3$ are arbitrary constants. Then, integrating once the second equation and substituting the expression for $v^2$ produces a second order ordinary differential equation with constant coefficients, from which we determine~$v^1$:
\[
    v^1 = A^1\sin\sqrt{\frac{\gamma^2}{\gamma^1}}p + A^2\cos\sqrt{\frac{\gamma^2}{\gamma^1}}p - \frac{1}{\gamma^2}\left(\delta^3p^3+\delta^2p^2+\delta^1p+\delta^0\right)+ \frac{\gamma^1}{(\gamma^2)^2}(6\delta^3p+2\delta^2),
\]
where
\begin{align*}
     &\gamma^2 = -2F\left(\frac{1}{\rho}(\nu-\mu)^2+\rho\right),&&\gamma^1 = \frac{1}{\rho}(\nu-\mu)^2-\rho,  \\
     &\delta^3 = \frac{2F\beta\mu }{3\rho}(\nu-\mu),&& \delta^2 = -\frac{2c_1F}{\rho}(\nu-\mu), \\
     &\delta^1 = -2\left(\frac{\beta\mu+c_2F}{\rho}(\nu-\mu)-(2F\kappa-\beta\rho)\right),&& \delta^0 = \frac{2(c_1-c_3F)}{\rho}(\nu-\mu)+c_4.
\end{align*}
provided that $\gamma^2/\gamma^1>0$. In the particular case of $\nu=\mu$, which leads to a considerable simplification of the above solution, this condition is verified. In the case of $\gamma^2/\gamma^1<0$ we can find a solution in terms of exponential functions, which is not presented here. Plugging this solution into the ansatz for the original unknown functions, the above solution gives the combination of a traveling wave in $y$-direction with a third order time-dependent polynomial in $x$ and $y$. For usual fixed boundaries in north--south direction, this is an unphysical solution. 

\subsection{Subalgebra \texorpdfstring{$\mathcal A^2_4$}{A24}}

An appropriate ansatz for this subalgebra is 
\[\psi^+=v^1(p)-f'y^2-2g(fy-x),\quad \psi^-=v^2(p)+2\kappa y-2\rho(fy-x),\quad p=t,\] 
where we have again employed the barotropic/baroclinic variables. This ansatz enables reduction of~\eqref{twolayer2} to the system
\[
     f'' - \beta g = 0,\quad v^2_p +2\kappa g - \frac{\beta\rho}{F} = 0.
\]
The general solution of this system is
\[
    v^1 = \theta(p), \quad v^2 = -\frac{2\kappa}{\beta}f'+\frac{\beta\rho}{F}p+c,
\]
where $\theta$ is an arbitrary function of $p$ and $c$ is an arbitrary constant. The first equation of the reduced system is the compatibility condition of the initial system~\eqref{twolayer2} and the invariant surface condition corresponding to the subalgebra $\mathcal{A}^2_4$. It implies the constraint $g=f''/\beta$ for the parameter functions $f$ and $g$, which is the necessary and sufficient condition for system~\eqref{twolayer2} to have solutions invariant with respect to the subalgebra $\mathcal{A}^2_4$. This solution has no obvious physical importance in dynamic meteorology. The reason is that this is a simple polynomial solution with time-dependent coefficients. We note that the function $\theta$ can be set to zero due to gauging of the stream functions generated by $\ZZ(g)$.

\section{Invariant reduction of boundary value problems}\label{sec:boundaryQG}

In this section, we aim to discuss admitted symmetries in the presence of boundaries. It is commonly assumed that group-invariant solutions may describe the behavior of a system that is far away from boundaries and hence a consideration of restrictions imposed by boundaries is usually omitted. However, as shown e.g.\ in~\cite{hirs97Ay,land96Ay}, there may be situations where the system without boundaries is not simply the limit of a system with very distant boundaries. Consequently, consideration of boundaries may be necessary even for a conceptual understanding of the model evolution. Moreover, as noted in the two previous sections, some of the group-invariant solutions corresponding to the optimal sets of inequivalent subalgebras give rise to unphysical solutions due to a violation of boundaries. We now compute those symmetries that are admitted by the boundaries and hence discuss which solutions could be compatible with the boundary value problem.

In the atmospheric sciences, for equations on the $\beta$-plane commonly a channel flow is assumed, which implies rigid boundaries $y=0$ and $y=Y$ in north--south direction. In east--west direction, one usually assumes periodic boundaries at $x=-L$ and $x=L$ or an infinitely extended domain. In this setting, imposed conditions for the two-layer model are
\begin{equation}\label{eq:boundaryconditions}
    \pdl{\psi_i}{x} = 0, \quad \pdl{}{t}\frac{1}{2L}\int\limits_{-L}^L\pdl{\psi_i}{y}\ddd x=0 \qquad \textup{for} \qquad y \in \{0,Y\}.
\end{equation}
The second condition implies conservation of circulation at the boundaries. According to~\cite{blum89Ay}, for a boundary value problem to be invariant, three conditions must be satisfied: (i) invariance of the equation, (ii) invariance of the domain, (iii) invariance of the values on the boundaries. The first condition was already established in Section~\ref{sec:symmetriesQG}, so it remains to verify (ii) and (iii). Although it is possible to solve this problem on the stage of the Lie algebra using the infinitesimal method~\cite{blum89Ay}, we find it more comfortable to work with the finite group transformations. The most general form of a continuous symmetry transformation of~\eqref{twolayer} is given by
\[
    (t,x,y,\psi_1,\psi_2) \mapsto (t+\ve_1, x+f, y+\ve_2,\psi_1-f'y+g+\ve_3,\psi_2-f'y+g-\ve_3),
\]
where $\ve_1$, $\ve_2$ and~$\ve_3$ are arbitrary constants and $f$ and~$g$ are arbitrary smooth functions of~$t$.
This transformation has to preserve both the domain and the boundary values. We now discuss the channel flow with three possibilities for boundaries in east--west direction.

\medskip

\noindent\textbf{Infinite domain.} If there are no sidewalls in east--west direction, that is $L\to\infty$, the most general symmetry preserving the boundary value problem is given by
\[
    (t,x,y,\psi_1,\psi_2) \mapsto (t+\ve_1, x+h, y,\psi_1-h'y+g+\ve_3,\psi_2-h'y+g-\ve_3).
\]
where $h$ is an arbitrary function of $t$ with $h''=0$. Hence, we have $h=\ve_4 t+\ve_5$, where $\ve_3$ and~$\ve_5$ are arbitrary constants, leading to the important result that Galilean boosts preserve the boundary value problem.

\medskip

\noindent\textbf{Periodic boundaries.} For periodic boundaries, we have $\psi_i(t,-L,y)=\psi_i(t,L,y)$. Similar calculations as above imply that the boundary-preserving symmetry group is the same as for an infinite-domain.

This shows that the Rossby wave solution is admitted by the boundary value problem. This may serve as a ``symmetry explanation'' for the prominent occurrence of this solution in geophysical fluid dynamics.

\medskip

\noindent\textbf{Limited domain.} The model of the limited domain in east--west direction is very natural in oceanography but can be also realized in the atmospheric sciences as flow in a mountainous region. For the purpose of simplicity, we assume a rectangular domain. Besides~\eqref{eq:boundaryconditions}, this setting requires the additional conditions
\[
        \pdl{\psi_i}{y} = 0, \quad \pdl{}{t}\frac{1}{Y}\int\limits_{0}^Y\pdl{\psi_i}{x}\ddd y=0 \qquad \textup{for} \qquad x \in \{-L,L\},
\]
where $L$ and $Y$ denote the length and the width of the rectangle, respectively. Symmetries that are compatible with this boundary value problem are
\[
    (t,x,y,\psi_1,\psi_2) \mapsto (t+\ve_1, x, y,\psi_1+g+\ve_3,\psi_2+g-\ve_3).
\]
Hence, the only group-invariant solution that can be realized on this domain is a stationary solution.

\section{Conclusion}\label{sec:conclusionQG}

In this paper, we consider the baroclinic two-layer model from the viewpoint of Lie symmetries. 
The maximal Lie invariance algebra and the complete point symmetry group including both continuous and discrete symmetries are computed. 
We construct exhaustive sets of one- and two-dimensional subalgebras which are inequivalent with respect to the adjoint action and carry out the corresponding Lie reductions in one and two variables. This completely solves the classical problem of Lie reduction for the two-layer model. The procedure leads to various classes of exact solutions, some of which are well known in the atmospheric sciences, including barotropic and baroclinic Rossby waves. Finally, also the two-layer boundary value problem is investigated in the light of admitted Lie symmetries. We obtain a result, which is similar to that obtained in~\cite{pukh72Ay} (see also \cite[pp.~379]{ovsi82Ay}) for the Navier--Stokes equations, namely that periodic boundary conditions admit Galilean boosts as symmetry transformations.

Although there are still a large number of obviously unphysical solutions, the study of the classical Lie problem is a necessary first step for the consideration of partially invariant solutions and nonclassical symmetries, which we save for future investigations. In addition, these solutions are of undeniable value when it comes to a numerical implementation of the two-layer equations, which can employ several kinds of boundary conditions. The solutions constructed can be used as benchmark tests to assess the quality of the numerical scheme involved by addressing issues such as convergence rates and the reproduction of correct phase space velocities of wave-like solutions. Moreover, computing differential invariants of subalgebras of the maximal Lie invariance algebra determined in this paper, one can extend the set of exact solutions, e.g.\ by the construction of differentially invariant solutions~\cite{golo04Ay}.

From the mathematical point of view it is interesting to note the following properties of each of the above inequivalent Lie reductions, excluding that obtained using the subalgebra $\mathcal A^1_1$:  
The associated reduced system is linear and has a simpler form in the ``barotropic/baroclinic variables''. 
The coupling of the reduced equations written in terms of ``barotropic/baroclinic variables'' is only under the presence of the ``baroclinic'' operator $\mathcal F=\p_{\psi^1}-\p_{\psi^2}$ in the subalgebra. 
The reduced equation for the baroclinic part of the model is structurally more complicated than that for the barotropic part. 

The linearity of the reduced system is especially beneficial in the second and third cases of reduction in one variable.
The linear superposition principle available for the resulting systems of two-dimensional partial differential equations
allows one to generate wide sets of exact solutions by linear combining of different solutions and substituting them back into the ansatz for the original unknown functions. Furthermore, there exist a number of special techniques for finding exact solutions of linear partial differential equations. 
One of them is presented in Appendix~\ref{sec:extendedsetofsolutionsQG}.

Since the two-layer model is only capable of resolving the barotropic mode and the first baroclinic mode, it would be interesting to study symmetry properties of multi-layer models. This would be a preliminary step on the way to the investigation of a continuously stratified atmosphere. On the other hand, from the standpoint of application, a deeper investigation of layer models may even be more important than the three-dimensional system of governing equations. This is true since numerical utilization of these equations calls for some discretization, hence naturally leading back to the model of a layered atmosphere. Therefore, the present investigation of the two-layer model may not only be interesting for historical reasons.

\subsection*{Acknowledgements}

AB is a recipient of a DOC-fellowship of the Austrian Academy of Sciences. The research of ROP was supported by the Austrian Science Fund (FWF), project P20632.

\appendix

\section{Extended Lie reduction of linear PDEs}\label{sec:extendedsetofsolutionsQG}

For clarity of the presentation we confine our consideration to the case of a single differential equation with a single dependent variable. 

Consider a linear partial differential equation $\mathcal L$: $Lu=0$ in the unknown function~$u$ of $n$~independent variables $x=(x_1,\dots,x_n)$,
where $L$ is the associated linear differential operator.
In what follows we use the summation convention for repeated indices.
The indices~$i$, $j$ and~$k$ run from~1 to~$n$, the indices~$a$ and~$b$ run from 1 to~$m$.

Suppose that the equation $\mathcal L$ possesses a nontrivial Lie symmetry operator~$Q_0$ of the form
$Q_0=\xi^i(x)\p_{x_i}+\eta(x)u\p_u$, where $\xi^i\xi^i\ne0$.
Then for an arbitrary constant~$\lambda$ the equation $\mathcal L$ obviously possesses also the vector field
$Q_\lambda=Q_0+\lambda u\p_u$ as a nontrivial Lie symmetry operator.
By $\widehat Q_\lambda$ we denote the differential operator which acts on functions of~$x$ and is associated with
the operator~$Q_\lambda$, i.e., $\widehat Q_\lambda=-\xi^i(x)\p_{x_i}+\eta(x)+\lambda$.
For any $m\in\mathbb N$ the differential function $(\widehat Q_\lambda)^mu$ is well known to be a characteristic of
a generalized symmetry of~$\mathcal L$,
and hence any associated generalized ansatz reduces the equation~$\mathcal L$
to a system of $m$ linear differential equations in $m$ new unknown functions of $n-1$ new independent variables
invariant with respect to the operator~$Q_\lambda$.
In order to construct an ansatz, we should integrate the partial differential equation $(\widehat Q_\lambda)^mu=0$.
The general solution of this equation gives the ansatz
\begin{equation}\label{EqGenAnsatzForLinEqs}
u=h(x)e^{\lambda\theta}\sum_{a=1}^m \varphi^a(\omega)\frac{\theta^{m-a}}{(m-a)!},
\end{equation}
where
$\omega=(\omega^1(x),\dots,\omega^{n-1}(x))$ is a tuple of functionally independent solutions of the equation $\xi^iu_{x_i}=0$,
which are assumed to be invariant independent variables,
$\theta=\theta(x)$ is a particular solution of the equation $\xi^iu_{x_i}=1$,
$h=h(x)$ is a particular nonvanishing solution of the equation $\xi^iu_{x_i}=\eta u$ and
$\varphi^a=\varphi^a(\omega)$ play the role of new unknown functions.

In view of the Lie invariance with respect to the operator~$Q_0$,
the equation~$\mathcal L$ is mapped by the point transformation
\[
\tilde x_1=\omega^1(x),\quad \dots,\quad \tilde x_{n-1}=\omega^{n-1}(x),\quad  \tilde x_n=\theta(x),\quad \tilde u =\frac u{h(x)}
\]
to the equation $\tilde L\tilde u=0$, where the coefficients of the operator~$\tilde L$ do not depend on~$\tilde x_n$.
In the new variables $(\tilde x, \tilde u)$ the ansatz~\eqref{EqGenAnsatzForLinEqs} takes the form
\begin{equation}\label{EqGenAnsatzForLinEqsTransformed}
\tilde u=e^{\lambda \tilde x_n}\sum_{a=1}^m \varphi^a(\tilde x_1,\dots,\tilde x_{n-1})\frac{\tilde x_n^{m-a}}{(m-a)!}.
\end{equation}
After substituting the ansatz~\eqref{EqGenAnsatzForLinEqsTransformed} into the transformed equation $\tilde L\tilde u=0$,
dividing the resulting equation by $e^{\lambda \tilde x_n}$ and subsequently splitting with respect to different powers of the variable~$\tilde x_n$,
we obtain, at least for the general value of~$\lambda$, the system~$\mathcal R$ of $m$~differential equations with respect to the functions~$\varphi^a$
in $n-1$ independent variables $(\tilde x_1,\dots,\tilde x_{n-1})$.
In singular cases, for certain values of~$\lambda$ some of the equations are identities.
The same reduction is obtained by the substitution of the ansatz~\eqref{EqGenAnsatzForLinEqs} into the initial equation~$\mathcal L$.

If the basic field is real, we can consider complex values of~$\lambda$, construct the corresponding complex exact solution
and then take its real and imagine parts in order to obtain real solutions.

The above consideration has a nice interpretation within the framework of Lie symmetries.
Introducing the new dependent variables $v^a=(\widehat Q_\lambda)^{m-a}u$,
instead of the single $m$th order linear partial differential equation $(\widehat Q_\lambda)^mu=0$ for finding a generalized ansatz,
we obtain the system of $m$ first order linear partial differential equations
\[
\widehat Q_\lambda v^1=0, \quad \widehat Q_\lambda v^a=v^{a-1}, \quad a=2,\dots,m.
\]
As $Q_\lambda$ is a Lie symmetry operator of the equation~$\mathcal L$, each function~$v^a$ satisfies this equation.
To give the interpretation, we consider the system~$\mathcal S$ of $m$ copies of the initial equation~$\mathcal L$
\[
Lv^1=0,\quad \dots,\quad Lv^m=0.
\]
This system obviously possesses the operators $\bar Q_0=\xi^i(x)\p_{x_i}+\eta(x)v^a\p_{v^a}$ and $v^b\p_{v^a}$ as its Lie symmetry operators.
Consider a linear combination of these operators, $\bar Q_\Lambda=\bar Q_0+\Lambda_{ab}v^b\p_{v^a}$,
which is also a Lie symmetry operator of the system~$\mathcal S$.
Here and in what follows $\Lambda_{ab}$ are constants.
Up to the equivalence generated by adjoint action of the Lie symmetry group of~$\mathcal S$ on the corresponding Lie invariance algebra
and due to the linear superposition principle,
we can assume without loss of generality that the matrix $\Lambda=(\Lambda_{ab})$ is the single $m\times m$ Jordan block with an eigenvalue~$\lambda$,
\[
\Lambda=J_\lambda^m=\arraycolsep=0ex
\left(\begin {array}{cccccc}
\lambda & 1 & 0 & 0 & \cdots & 0 \\
0 & \lambda & 1 & 0 & \cdots & 0 \\
0 & 0 & \lambda & 1 & \cdots & 0 \\
\cdots & \cdots & \cdots & \cdots & \cdots & \cdots\\
0 & 0 & 0 & 0 & \cdots & 1 \\
0 & 0 & 0 & 0 & \cdots & \lambda
\end {array}\right).
\]
The invariant surface condition for the operator~$\bar Q_\Lambda$ with $\Lambda=J_\lambda^m$ consists of the equations
\[
\xi^iv^1=(\eta+\lambda)v^1,\quad \xi^iv^a=(\eta+\lambda)v^a+v^{a-1},\quad a=2,\dots,m,
\]
and an ansatz constructed with this operator has the form
\begin{equation}\label{EqGenAnsatzForLinEqsIterated}
v^a=h(x)e^{\lambda\theta}\sum_{b=1}^a \varphi^b(\omega)\frac{\theta^{a-b}}{(a-b)!},
\end{equation}
where the notation from the ansatz~\eqref{EqGenAnsatzForLinEqsTransformed} is used.
According to the general theory of Lie reduction~\cite{olve86Ay},
the ansatz~\eqref{EqGenAnsatzForLinEqsIterated} necessarily reduces the system~$\mathcal S$ to a system in the functions~$\varphi^a$,
which obviously coincides with the system~$\mathcal R$ obtained by reducing the single equation~$\mathcal L$
using the generalized ansatz~\eqref{EqGenAnsatzForLinEqs}.

If the equation~$\mathcal L$ is considered over the real field and the eigenvalue~$\lambda$ is complex,
it is not necessary to pass to the associated real Jordan block.
As above, we can find the corresponding complex exact solution
and then take its real and imagine parts in order to construct real solutions.
This additionally justifies the usage of complex values of~$\lambda$ in the real case.

\begin{example*}
For the general value of~$A$, the second equation of system~\eqref{eq:redsublinear} admits
only one independent nontrivial Lie symmetry operator, $\p_p$.
Consider a system of $m$ copies of this equation:
\begin{gather}\label{eq:systemlinear}
v^a_{ppq}-2(Av^a)_q + v^a_p = 0,
\end{gather}
where for simplicity we have omitted bars over the variables and scaled $F=1$ and $\beta=1$. This system admits the Lie symmetry operator
$\bar Q_\Lambda =\p_p +\Lambda_{ab}v^b\p_{v^a}$, where $\Lambda=J_\lambda^m$.
The invariant surface condition associated with $\bar Q_\Lambda$ reads
\[v^1_p=\lambda v^1,\quad v^a_p=\lambda v^a+v^{a-1},\quad a=2,\dots,m.\]
Its general solution provides us with an appropriate ansatz for Lie reduction:
\[
v^a = \exp(\lambda p)\sum_{b=1}^a\varphi^b(\omega)\frac{p^{a-b}}{(a-b)!},
\]
where $\omega=q=t$ is the invariant independent variable.
Substituting this ansatz into system~\eqref{eq:systemlinear} yields the system of ordinary differential equations for~$\varphi^a$
\begin{align*}
    &\mathrm L\varphi^1 = 0, \\
    &\mathrm L\varphi^2 + 2\lambda \varphi^1_q + \varphi^1 = 0, \\
    &\mathrm L\varphi^k + 2\lambda \varphi^{k-1}_q + \varphi^{k-1} + \varphi^{k-2}_q = 0, \qquad k=3,\dots,m,
\end{align*}
where the operator $\mathrm L$ is given by $\mathrm L:=(\lambda^2-2A)\p_q - 2A_q + \lambda$.

The solution of the above system is:
\begin{align*}
    &\varphi^1 = c_1e^{-\zeta},  \\
    &\varphi^2 = c_2e^{-\zeta}  + e^{-\zeta} \int \frac{\varphi^1+2\lambda\varphi^1_q}{2A-\lambda^2}e^\zeta\,\ddd q, \\
    &\varphi^k = c_ke^{-\zeta}  + e^{-\zeta} \int \frac{\varphi^{k-1}+2\lambda\varphi^{k-1}_q+\varphi^{k-2}_q}{2A-\lambda^2}e^\zeta\,\ddd q, \qquad k=3,\dots,m,
\end{align*}
where
\[
    \zeta =\int\frac{2A_q-\lambda}{2A-\lambda^2}\ddd q.
\]

In the special cases $A=\const$ and $A=\varkappa(q+C_0)^2$, where $\varkappa,C_0=\const$,
we can make more generalized reductions of the equation under consideration, which involves operators extending the Lie invariance algebra
of the general case, cf.\ Section~\ref{sec:SubalgebraA12}.
\end{example*}

{\footnotesize

}
\end{document}